\newcommand\blfootnote[1]{%
  \begingroup
  \renewcommand\thefootnote{}%
  \NoHyper\footnote{#1}\endNoHyper
  \addtocounter{footnote}{-1}%
  \endgroup
}
\DeclareMathOperator*{\argmax}{arg\,max}
\newcommand{\pmms}{\mu}
\newtheorem{other}{\relax}[section] 
\newtheorem{lemma}[other]{Lemma}
\newtheorem{proposition}[other]{Proposition}
\newtheorem{corollary}[other]{Corollary}
\newtheorem{remark}[other]{Remark}
\newtheorem{example}[other]{Example}
\newtheorem{definition}[other]{Definition}
\newcommand{\problemspace}{0.1cm}
\title{Probing EFX via PMMS:\\(Non-)Existence Results in Discrete Fair Division}
\author{Jarosław Byrka$^\ast$ \and Franciszek Malinka$^\dagger$ \and Tomasz Ponitka$^\ddagger$}
\date{July 30, 2025}
\begin{document}

\maketitle

\blfootnote{This project has been partially funded by NCN grant number 2020/39/B/ST6/01641, by the European Research Council (ERC) under the European Union's Horizon 2020 research and innovation program (grant agreement No. 866132), by an Amazon Research Award, by the Israel Science Foundation Breakthrough Program (grant No. 2600/24), by a grant from the TAU Center for AI and Data Science (TAD), and by the NSF-BSF (grant number 2020788).}
\blfootnote{$^\ast$University of Wrocław, Poland. Email: \texttt{jby@cs.uni.wroc.pl}}
\blfootnote{$^\dagger$University of Wrocław, Poland. Email: \texttt{franciszek.malinka@gmail.com}}
\blfootnote{$^\ddagger$Tel Aviv University, Israel. Email: \texttt{tomaszp@mail.tau.ac.il}}

\begin{abstract}

We study the fair division of indivisible items and provide new insights into the EFX problem, which is widely regarded as the central open question in fair division, and the PMMS problem, a strictly stronger variant of EFX. Our first result constructs a three-agent instance with two monotone valuations and one additive valuation in which no PMMS allocation exists. Since EFX allocations are known to exist under these assumptions, this establishes a formal separation between EFX and PMMS. 

We prove existence of fair allocations for three important special cases. We show that EFX allocations exist for personalized bivalued valuations, where for each agent $i$ there exist values $a_i > b_i$ such that agent $i$ assigns value $v_i(\{g\}) \in \{a_i, b_i\}$ to each good $g$. We establish an analogous existence result for PMMS allocations when $a_i$ is divisible by $b_i$. We also prove that PMMS allocations exist for binary-valued MMS-feasible valuations, where each bundle $S$ has value $v_i(S) \in \{0, 1\}$. Notably, this result holds even without assuming monotonicity of valuations and thus applies to the fair division of chores and mixed manna. Finally, we study a class of valuations called pair-demand valuations, which extend the well-studied unit-demand valuations to the case where each agent derives value from at most two items, and we show that PMMS allocations exist in this setting. Our proofs are constructive, and we provide polynomial-time algorithms for all three existence results.

\end{abstract}

\newpage

\section{Introduction}
\label{section:introduction}

The fair division problem involves allocating a set of resources among a group of agents to meet certain fairness criteria. When there are only two agents, the standard fair division protocol is the {\em cut-and-choose} method---where one agent proposes a division of the resources into two parts, and the other selects the preferred part---a practice that dates back to the story of Abraham and Lot in the Book of Genesis. The fair division literature, beginning with \citet*{steinhaus}, explores the fairness guarantees that can be achieved for more than two agents.

The initial focus of this literature has been on the cake-cutting scenario, where the resources to be allocated are assumed to be infinitely divisible. A key property of cut-and-choose in cake-cutting scenarios with two agents, formalized by \citet*{foley1967resource}, is {\em envy-freeness}: no agent prefers another agent’s allocation to their own. A central achievement of the cake-cutting literature is that this guarantee can be extended to any number of agents, under mild assumptions \cite{stromquist1980how}.

This work focuses on the discrete fair division problem, where the goods are {\em indivisible}. 
Specifically, we consider a setting with $n$ agents and $m$ items, where each agent $i \in N$ has a valuation function $v_i : 2^{M} \to \mathbb{R}_{\geq 0}$ that assigns a non-negative value to each subset of the items. The goal is to compute a fair partition $(X_1, \ldots, X_n)$ such that $X_i \cap X_j = \emptyset$ for all $i \neq j$ and $X_1 \cup \cdots \cup X_n = M$.
It is easy to see that even for two agents, an envy-free allocation may not exist—for instance, when there is only one item.

\vspace{0.1cm}
\noindent \textbf{EFX.}
A key property of cut-and-choose in discrete settings with two agents, identified by \citet*{unreasonable-fariness}, is {\em envy-freeness up to any good} (EFX): no agent prefers another’s allocation after the removal of any single item from that bundle, i.e.,  for all $i,j \in N$, it holds that $v_i(X_i) \geq v_i(X_j \setminus \{g\})$ for all $g \in X_j$. 
Extending the existence of EFX allocations to an arbitrary number of agents---analogous to the existence of envy-free allocations in cake cutting---remains a major open problem, even for additive valuations; notably, no counterexamples are known, even with arbitrary monotone valuations.

Nevertheless, significant progress has been made on the EFX problem for various special cases. \citet*{Amanatidis_2021} addressed the case of {\em bivalued valuations},
where each agent assigns one of two values to each item, i.e., there are some $a,b \geq 0$ such that $v_i(\{g\}) \in \{a, b\}$ for all $i \in N$ and $g \in M$, and $v_i(S) = \sum_{g \in S} v_i(\{g\})$ for all $S \subseteq M$. They provided two distinct proofs of the existence of EFX allocations for bivalued valuations: (1) an existential proof based on maximizing Nash welfare (i.e., the product of agents' valuations), and (2) a constructive proof via the Match-and-Freeze algorithm. Both proofs crucially rely on the parameters $a$ and $b$ being fixed across all agents' valuations.

Furthermore, \citet*{efx-3-agents,DBLP:journals/ior/AkramiACGMM25} established the existence of EFX allocations for {\em three agents} where two agents have arbitrary monotone valuations and the third agent's valuation satisfies a natural condition known as MMS-feasibility, which covers important classes of valuations, including additive; see \Cref{def:mmsfeasibility}. See also \Cref{sec:relatedwork} for additional discussion on the EFX problem.

\vspace{0.1cm}
\noindent \textbf{PMMS.}
While EFX and its relaxations have received significant attention in the fair division literature, \citet*{unreasonable-fariness} also formalized another compelling fairness criterion known as the {\em pairwise maximin share} (PMMS), a natural generalization of the guarantees of cut-and-choose.
Specifically, for any pair of agents $i$ and $j$, it requires that agent $i$ receives at least as much value as they would in the maximin partition of their combined items, i.e., the partition $(A, B)$ of $X_i \cup X_j$ that maximizes $\min\{v_i(A), v_i(B)\}$. 
This corresponds to the value one could guarantee as the cutter in the cut-and-choose protocol.

While PMMS allocations may not exist for arbitrary monotone valuations (see \Cref{prop:pmmsnonexistance}), it remains an open question whether they exist under MMS-feasible valuations, including additive ones.
Notably, under mild assumptions, any PMMS allocation is also EFX, so proving the existence of PMMS allocations would simultaneously resolve the EFX problem; see \Cref{sec:implicationsofpmms} for more details. However, PMMS is significantly less understood, and little is known about its existence in special cases.

An important advantage of PMMS over EFX is that its definition naturally extends to the fair division of {\em chores}, where agents have negative valuations that decrease as the number of items increases. In contrast, the definition of EFX must be adapted for chores by requiring that $v_i(X_i \setminus \{g\}) \geq v_i(X_j)$ for all $g \in X_i$. Moreover, \citet*{ChristoforidisS24} recently showed that EFX for chores does not exist when there are three agents with arbitrary monotonically decreasing valuations;
notably, these valuations violate MMS-feasbility.

The goal of this work is to advance our understanding of the existence of both EFX and PMMS allocations in key special cases of discrete fair division.

\subsection{Our Contribution}

In this work, we provide new insights into the existence of EFX and PMMS allocations, including one non-existence result and three existence results.

We begin by presenting our negative result for PMMS in the case of three agents.

\begin{restatable}[Separation between EFX and PMMS for Three Agents]{theorem}{thmseparation}
    There exists an instance with three agents, one with an additive valuation and two with arbitrary monotone valuations, for which no PMMS allocation exists.
    \label{thm:separation}
\end{restatable}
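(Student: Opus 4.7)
The plan is to exhibit an explicit small instance with $m$ items for some modest $m$, and to verify by structured case analysis that no allocation is PMMS. Two of the three valuations may be arbitrary monotone, which gives enough freedom to make them sharply non-additive; the third must be additive and thus serves as the ``rigid'' agent whose bundle value is pinned down by its singleton weights.

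For the construction, I would let agents~1 and~2 have monotone valuations that assign high value only to specific target bundles $T_1, T_2 \subseteq M$, with negligible value otherwise---concretely, $v_i(S)$ could take a large value whenever $T_i \subseteq S$ and much smaller values when $T_i$ is only partially contained, with strict monotonicity enforced by small additive perturbations. The sets $T_1$ and $T_2$ should overlap substantially, creating unavoidable competition: no allocation can simultaneously give agents~1 and~2 enough of their respective targets while keeping agent~3's bundle large enough to defeat the pairwise maximin cuts. Agent~3's additive values would be chosen so that whenever agent~$i$ is missing part of $T_i$, a re-cut of $X_3 \cup X_i$ can pack a target-containing bundle into one half and a still-valuable additive complement into the other, yielding a minimum strictly above $v_i(X_i)$.

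For the verification, I would exploit symmetry between agents~1 and~2 and the observation that, by monotonicity, items outside $T_1 \cup T_2 \cup \mathrm{supp}(v_3)$ can be handled uniformly. This reduces the space of allocations to a small number of qualitatively distinct types, indexed by how $T_1$ and $T_2$ are distributed among $(X_1, X_2, X_3)$. For each type, I would identify a pair $(i,j)$ together with an explicit partition $(A,B)$ of $X_i \cup X_j$ such that $\min\{v_i(A), v_i(B)\} > v_i(X_i)$, certifying a PMMS violation. A final step collects these witnesses into a single case table.

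The main obstacle is that EFX allocations are known to exist in exactly this setting by \citet*{efx-3-agents, DBLP:journals/ior/AkramiACGMM25}, so any counterexample must exploit precisely the extra strength of PMMS over EFX---namely, that the virtual cutter may recombine items freely across two bundles rather than merely delete a single good. Calibrating the valuations so that this recombination always yields a strict improvement in at least one pair, without a corner case accidentally restoring PMMS, is the delicate part. I would expect to build in a uniform robustness margin so that every violation in the case analysis holds with a strictly positive gap, which also eases checking correctness of the final numerical instance.
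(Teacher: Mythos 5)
There is a genuine gap: your proposal is a strategy outline rather than a proof, and the one concrete structural choice you do commit to---valuations for agents 1 and 2 that are large exactly when a target bundle $T_i$ is fully contained and ``much smaller'' otherwise---is unlikely to work. The pairwise maximin share $\mu_i(X_i \cup X_j)$ is a \emph{minimum} over the two halves of a cut, so with a coarse target-bundle valuation at most one half of any re-cut can contain $T_i$, and the other half is always worth little to agent $i$; hence $\mu_i$ stays small and agents 1 and 2 rarely PMMS-envy anyone. Relatedly, your claim that agent 3's additive weights let one ``pack a target-containing bundle into one half and a still-valuable additive complement into the other'' conflates whose valuation measures the complement: for agent $i$'s PMMS-envy the complement must be valuable under $v_i$, not under $v_3$. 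All of the difficulty lives in choosing the ``much smaller'' intermediate values so that violations arise in the balanced case where every agent gets a similarly sized bundle, and your proposal leaves exactly that unspecified.

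The paper's actual proof makes these intermediate values the whole point. It uses three agents and six items: agent 3 is additive with $v_3(\{j\}) = 100 + j$, while agents 1 and 2 value every singleton at $1$, every bundle of size $\geq 3$ at $7$, and every pair at a carefully tuned value in $\{2,\ldots,6\}$ (given in a table). Empty and singleton bundles are killed by short arguments much like your Cases 1--2, but the critical case---each agent receives exactly two items---is settled by exhaustively checking all $90$ such allocations (equivalently, verifying that a certain compatibility graph on pairs is triangle-free). The authors note the instance was found by computational search; there is no clean ``two overlapping targets'' certificate of the kind you hope for. To repair your proposal you would need to either produce explicit numbers and carry out the full case check, or find a structural argument that genuinely forces both halves of every relevant re-cut to be valuable to the envious agent, which your current valuation shape does not do.
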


This result establishes a formal separation between the existence of EFX and PMMS allocations for the case of three agents. \citet*{DBLP:journals/ior/AkramiACGMM25} have shown that EFX is always guaranteed under the assumptions of the theorem. In contrast, we show that PMMS may fail to exist. {This indicates that proving the existence of PMMS for the case of three agents might require fundamentally different techniques.}

Our first positive result concerns the existence of EFX in the important case of \emph{personalized} bivalued valuations \cite[e.g.,][]{DBLP:conf/atal/0001LRS23,DBLP:conf/wine/BuLLST23,DBLP:journals/corr/abs-2505-22174}, where each agent $i \in N$ has values $a_i, b_i \geq 0$ such that $v_i(\{g\}) \in \{a_i, b_i\}$ for all $g \in M$.
We also consider the existence of PMMS allocations under \emph{factored} personalized bivalued valuations \cite[e.g.,][]{ebadian2022fairlyallocateeasydifficult,DBLP:conf/aaai/AkramiC0MSSVVW22}, where each $a_i$ is an integer multiple of $b_i$.
We establish the following theorem for this setting.

\begin{restatable}[Personalized Bivalued Valuations]{theorem}{thmbivalued}
    For any instance with personalized bivalued valuations, there exists an allocation that satisfies EFX. Moreover, if the valuations are also factored, then this allocation is guaranteed to satisfy PMMS.
    \label{thm:bivalued}
\end{restatable}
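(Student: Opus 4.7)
The plan is to produce a polynomial-time algorithm that, for any personalized bivalued instance, returns an EFX allocation, and to show that the \emph{same} allocation is automatically PMMS whenever the valuations are also factored. Our overall strategy extends the Match-and-Freeze template of \citet*{Amanatidis_2021} to agent-specific thresholds: since the labels \emph{heavy} (value $a_i$) and \emph{light} (value $b_i$) now depend on the agent, the matching must be agent-specific, and the freezing step must equalize bundles that are measured on different scales.

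First, I would build the allocation iteratively, maintaining two invariants throughout: (i) the current partial allocation is EFX; and (ii) there is a bipartite matching $\mu$ from agents to items already assigned to them, such that every agent matched under $\mu$ holds an item they value as $a_i$. At each iteration, an unallocated item $g$ is either given to an agent for whom $g$ is light, provided both invariants still hold, or used to augment $\mu$ along an alternating path of heavy-for-someone items. The freeze step activates when no such assignment is possible: the bundles whose value currently dominates are temporarily excluded from receiving further items until the remaining bundles catch up.

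Second, I would verify EFX by a case analysis on each ordered pair $(i,j)$. The key observation is that $i$ can envy $j$ up to any single good only if $X_j$ contains at least two items that $i$ values as $a_i$; invariant (ii) guarantees that in this case $X_i$ must also contain a heavy-for-$i$ item, and the freeze rule bounds the number of light items by which $X_j$ can exceed $X_i$. A lexicographic potential based on the size of $\mu$ and on the number of unallocated items yields termination in polynomial time. For the factored case, I would then argue that this same allocation is PMMS. Writing every value in units of $b_i$, each item contributes either $k_i=a_i/b_i$ or $1$ from agent $i$'s perspective, so the maximin partition of $X_i \cup X_j$ reduces to a balanced bin-packing with item sizes in $\{k_i,1\}$. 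Combining the EFX inequality with invariant (ii) (which controls the number of heavy-for-$i$ items appearing in $X_j$ relative to $X_i$) should yield the numerical bound $v_i(X_i) \geq \mathrm{PMMS}_i(X_i \cup X_j)$ directly.

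The hard part will be the freeze/augment step: showing that whenever EFX is about to be broken one can reroute heavy items along an alternating path in $\mu$ without opening up envy at some \emph{other} pair of bundles. Carrying invariant (ii) through this rerouting is what ultimately powers the PMMS conclusion, and I expect the factored assumption to be essential here, since without integrality of $a_i/b_i$ the maximin partition can leave a remainder that a purely combinatorial EFX-preserving procedure cannot absorb.
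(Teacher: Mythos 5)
Your high-level template is the same as the paper's: adapt Match-and-Freeze to agent-specific thresholds, allocate heavy items via a matching, freeze agents who got heavy items while others catch up on light items, prove EFX by a case analysis, and close the PMMS case using integrality of $a_i/b_i$. However, the proposal leaves unresolved precisely the two mechanisms that make the personalized setting work, and these are genuine gaps rather than routine details. First, you never specify \emph{which} agents win the competition for heavy items when there are not enough to go around. The paper uses a maximum-weight maximal matching with edge weights $a_i/b_i$ and proves, by an alternating-path exchange argument, that in every connected component the ratio $a_j/b_j$ of every unmatched agent is at most that of every matched agent. This is what allows the freeze duration to be set to $\lfloor t-1\rfloor$ rounds with $t$ the largest unmatched ratio in the component, while guaranteeing that no matched agent $i$ is ever frozen for more than $\lfloor a_i/b_i-1\rfloor$ rounds --- without that, a matched agent with a small ratio could be frozen so long that she herself ends up EFX-envying the agents she was frozen for. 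Your invariant (ii) (matched agents hold heavy items) does not encode this ordering, so your claim that ``the freeze rule bounds the number of light items by which $X_j$ can exceed $X_i$'' is exactly the assertion that needs proof and cannot be derived from the invariants you state. Second, you omit any rule for distributing leftover items among unmatched agents; the paper shows (in the remark following the theorem) that if a previously-frozen agent is allowed to take a leftover item ahead of a never-matched agent in the final round, EFX genuinely fails, so a priority order by freezing time is necessary, not cosmetic.

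On the PMMS side, your reduction to ``balanced bin-packing with sizes in $\{k_i,1\}$'' is in the right spirit, and the paper's actual closing step is even simpler: if $v_i(X_i)\ge v_i(X_j)-b_i$, then after normalizing $b_i=1$ one gets $v_i(X_i)\ge\lfloor v_i(X_i\cup X_j)/2\rfloor\ge\mu_i(X_i\cup X_j)$ by integrality of the fair share. But be aware that this inequality does \emph{not} hold in every case: when agent $j$ receives a heavy-for-$i$ item in the last heavy round and then stays frozen until the algorithm ends, the bundles have very different sizes, $v_i(X_j)-v_i(X_i)$ can exceed $b_i$, and one must bound $\mu_i(X_i\cup X_j)$ directly by exhibiting that the light items in $X_i$ are worth at most $a_i$ in total. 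Your sketch, which derives PMMS ``directly'' from the EFX inequality plus invariant (ii), would fail on this case.
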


This generalizes the result of \citet*{Amanatidis_2021} in two ways. First, we consider the more general setting of personalized bivalued valuations. Second, we also establish the existence of a PMMS allocation under an additional assumption, rather than just an EFX allocation.

Our result is obtained by adapting the Match-and-Freeze algorithm to the setting of personalized bivalued valuations (see \Cref{alg:personalized-bivalued-pmms}). Notably, we also show that the argument based on maximizing Nash welfare does not extend to this setting (see \Cref{prop:mnw-fail-personalized}).

We note that the existence of EFX allocations for personalized bivalued valuations has also been independently discovered by \citet*{jin2025paretooptimalfairallocationspersonalized}.

Our second positive result addresses the natural class of {\em binary-valued valuations}, where each bundle of items is either desirable or non-desirable to an agent; i.e., 
$v_i(S) \in \{0,1\}$ for all $i \in N$ and $S \subseteq M$.
We prove the following:

\begin{restatable}[Binary-Valued Valuations]{theorem}{thmbinary}
    For any instance with binary-valued MMS-feasible valuations, there exists a PMMS allocation.
    \label{thm:binary}
\end{restatable}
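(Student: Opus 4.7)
My plan is to begin with a combinatorial reformulation of the PMMS condition tailored to the binary setting. Since each $v_i$ takes values in $\{0, 1\}$, the pairwise constraint $v_i(X_i) \geq \mathrm{PMMS}_i(X_i \cup X_j)$ is nontrivial precisely when $X_i \cup X_j$ admits a 2-partition with both halves of $v_i$-value one; in that case one must have $v_i(X_i) = 1$. The task therefore reduces to finding a partition $(X_1, \ldots, X_n)$ of $M$ such that for every pair $(i,j)$ with $v_i(X_i) = 0$, the union $X_i \cup X_j$ admits no 2-partition into two $v_i$-valuable halves.

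Based on this reformulation, I would consider the allocation obtained by first maximizing the number of \emph{happy} agents---those with $v_i(X_i) = 1$---and then applying a suitable secondary tie-breaker (for instance, lexicographically minimizing the bundle sizes of unhappy agents) to prevent cycling. To prove PMMS, I would argue by contradiction: if a pair $(i,j)$ violates PMMS, then $v_i(X_i) = 0$ and $X_i \cup X_j = A \sqcup B$ with $v_i(A) = v_i(B) = 1$. Reassigning $X_i$ and $X_j$ to $\{A, B\}$ strictly increases the happy count whenever $j$ is unhappy, or whenever one of the two orientations gives $j$ a $v_j$-valuable bundle, in each of these cases contradicting optimality.

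The main obstacle is the subcase $v_j(X_j) = 1$ with $v_j(A) = v_j(B) = 0$, a pathology that generic non-monotone valuations would allow. Ruling this out is where binary MMS-feasibility of $v_j$ has to do the work: I expect a key structural lemma of the form ``whenever $v_j(X_j) = 1$ and the union $X_i \cup X_j$ admits a 2-partition into $v_i$-valuable parts, there is a 2-partition $(A', B')$ of $X_i \cup X_j$ into $v_i$-valuable parts in which at least one of $A', B'$ is also $v_j$-valuable.'' With this lemma in hand, the appropriate swap produces a strict improvement of the potential, delivering the contradiction. Since the potential takes only polynomially many distinct values and each improving swap can be computed efficiently, the procedure also yields the polynomial-time algorithm promised in the introduction, and its correctness transfers unchanged to chores and mixed manna because monotonicity is never invoked.
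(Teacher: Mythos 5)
Your reduction of binary PMMS to the ``happy agent'' condition is correct, and you have correctly isolated the one problematic subcase ($v_j(X_j)=1$ while every $v_i$-valuable bisection of $X_i\cup X_j$ leaves both halves $v_j$-worthless). However, the structural lemma you propose to dispose of that subcase is false, and this is a genuine gap. MMS-feasibility of $v_j$ only guarantees $\max\{v_j(A'),v_j(B')\}\ge \mu_j(X_i\cup X_j)$, and $\mu_j(X_i\cup X_j)$ can be $0$ even when $v_j(X_j)=1$. Concretely, take two items and let $v_i(S)=1$ iff $S\neq\emptyset$ and $v_j(S)=1$ iff $S=\{1,2\}$; both valuations are monotone, binary, and MMS-feasible. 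With $X_i=\emptyset$ and $X_j=\{1,2\}$, agent $i$ PMMS-envies $j$, the unique $v_i$-valuable bisection is $(\{1\},\{2\})$, and both halves have $v_j$-value $0$. So the swap trades $i$'s happiness for $j$'s, the happy count $W$ is unchanged, and your primary potential does not move. Your suggested tie-breaker does not rescue this either: in the same example the allocation $\langle\emptyset,\{1,2\}\rangle$ has its unique unhappy agent holding a bundle of size $0$, which is lexicographically optimal under ``minimize bundle sizes of unhappy agents,'' yet it is not PMMS. The deeper difficulty, which a single local swap cannot address, is that making $j$ unhappy may cause $j$ to PMMS-envy a third agent, and the perturbation of $X_i$ and $X_j$ can also create new PMMS-envy from agents outside the pair, so the count of PMMS-satisfied agents need not improve monotonically under pairwise exchanges.

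The paper's proof handles exactly this cascade by a more global move. From a violating agent $s$ it builds a \emph{cut-and-choose graph} in which each agent $i$ points either to some $j$ with $v_i(X_s)<\mu_i(X_s\cup X_j)$ or, failing that, to $s$ itself; following this functional graph from $s$ yields either a cycle through $s$ (resolved by rotating bundles along the cycle) or a ``lollipop'' (resolved by rotating bundles and having the two colliding agents split $X_s\cup X_{i_w}$ by cut-and-choose, with MMS-feasibility guaranteeing the cutter also gets value $1$). Termination follows from the lexicographic potential $\langle W,E\rangle$, where $W=\sum_i v_i(X_i)$ and $E$ is the number of agents satisfying PMMS; crucially, in the one case where $W$ stagnates, the last agent in the cycle satisfies $\mu_{i_k}(X_{i_0}\cup X_j)=0$ for \emph{all} $j$, which is what certifies that $E$ strictly increases. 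If you want to salvage your argmax formulation, you would need to optimize a potential of this two-level form and prove an exchange argument that resolves an entire envy chain at once rather than a single pair; as written, the proposal does not establish the theorem.
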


To prove our result, we introduce a novel Cut-and-Choose-Graph procedure (see \Cref{alg:binary-pmms}), inspired by the Envy-Graph procedure of \citet*{lipton}.

A key feature of our proof is that it does not rely on the monotonicity of valuations. Therefore, it applies not only to instances involving goods, but also to those involving chores or mixed manna, i.e., a combination of goods and chores.

Finally, we study the class of pair-demand valuations, where each agent desires to receive at most two items. This class naturally generalizes unit-demand valuations and is a special case of the broader and well-studied class of $k$-demand valuations \cite[e.g.,][]{DBLP:conf/atal/DeligkasMS21,DBLP:conf/icml/ZhangC20}, with $k = 2$.
We establish the following theorem for this setting.

\begin{restatable}[Pair-Demand Valuations]{theorem}{thmpmmspairdemand}
    For any instance with  pair-demand valuations, there exists a PMMS allocation.
    \label{thm:pmms-pair-demand}
\end{restatable}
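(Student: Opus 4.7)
\smallskip
\noindent\textbf{Proof proposal.} The plan is to exploit a clean closed form for the PMMS value $\mu_i$ under pair-demand valuations, construct the allocation via a maximum-weight matching with leximin tiebreaking, and verify PMMS using a swap argument.

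First, I would derive a closed form for $\mu_i$. Writing the top four items in $S$ ranked by $v_i(\{\cdot\})$ as $a \geq b \geq c \geq d$, the maximin partition $(\{a, d\}, \{b, c\})$ is optimal when $|S| \geq 4$---since a pair-demand $v_i$ depends only on the top two items on each side, all remaining items can be placed arbitrarily---yielding $\mu_i(S) = \min(v_i(\{a\}) + v_i(\{d\}),\; v_i(\{b\}) + v_i(\{c\}))$, with analogous closed forms for smaller $|S|$ (e.g., $\min(v_i(\{a\}),\; v_i(\{b\}) + v_i(\{c\}))$ for $|S|=3$). This reduces PMMS to a structural condition on how the top four items of $X_i \cup X_j$ are distributed between the two agents.

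Next, I would construct the allocation via a maximum-weight $b$-matching with $b = 2$ in the bipartite graph between agents and items (edge weight $v_i(\{g\})$), breaking ties by leximin-maximizing the agent-value vector $(v_1(X_1), \ldots, v_n(X_n))$. Unmatched items are then distributed arbitrarily: by the optimality of the matching, every unmatched item has singleton value at most that of each agent's worst matched item, so placing it into any bundle does not change $v_i(X_i)$.

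Finally, to verify PMMS I would argue by contradiction. Suppose $(i, j)$ violates PMMS; the closed form yields a partition $(A, B)$ of $X_i \cup X_j$ with $v_i(A), v_i(B) > v_i(X_i)$. Using the inequality $v_j(A) + v_j(B) \geq v_j(X_i \cup X_j) \geq v_j(X_j)$, which is immediate for pair-demand since $v_j$ is a top-two singleton sum, at least one of the two swaps $(X_i, X_j) \mapsto (A, B)$ or $(X_i, X_j) \mapsto (B, A)$ should either strictly increase the total matching weight (contradicting the max-weight choice) or be weight-neutral while strictly leximin-improving the value vector (contradicting the leximin tiebreak). The main obstacle is closing the case analysis in borderline situations where $v_j$ splits roughly evenly across $(A, B)$: handling these requires carefully tracking which of the top four items of $X_i \cup X_j$ lie in each bundle, and possibly strengthening the algorithm with a secondary objective (such as Nash welfare) or an explicit pairwise rebalancing step on the top items of each $X_i \cup X_j$ to drive the contradiction.
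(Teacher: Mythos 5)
Your closed form for $\mu_i$ is correct and is essentially the paper's key lemma (\Cref{lem:additive-pair-demand-envy}): for the top four items $a\ge b\ge c\ge d$ of $X_i\cup X_j$ under $v_i$, the partition $(\{a,d\},\{b,c\})$ is maximin, so $\mu_i(X_i\cup X_j)=\min\{v_i(\{a\})+v_i(\{d\}),\,v_i(\{b\})+v_i(\{c\})\}$. The problem is everything after that. Your candidate allocation --- a maximum-weight $b$-matching with $b=2$ and leximin tiebreaking --- does not in general satisfy PMMS, so no amount of care in the swap argument can close the gap. Concretely, take two agents and four items with $v_{1,\cdot}=(2,2,1,1)$ and $v_{2,\cdot}=(10,10,0,0)$. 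The unique maximum-weight $2$-matching gives $X_2=\{1,2\}$ and $X_1=\{3,4\}$ (total weight $22$; any matching handing agent $1$ one of items $1,2$ has weight at most $13$), so leximin is never invoked. But then $v_1(X_1)=2$ while $\mu_1(\{1,2,3,4\})=\min\{2+1,2+1\}=3$, so agent $1$ PMMS-envies agent $2$. This also pinpoints why your swap argument fails: from $v_j(A)+v_j(B)\ge v_j(X_j)$ you only get $\max\{v_j(A),v_j(B)\}\ge v_j(X_j)/2$, so both reallocations $(A,B)$ and $(B,A)$ can strictly decrease total weight (here agent $2$ drops from $20$ to $10$ while agent $1$ gains only $1$), and a weight-decreasing move is not a contradiction to either the max-weight choice or the leximin tiebreak. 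The ``borderline situation'' you flag is thus not a technicality but the generic failure mode: utilitarian weight is the wrong potential, because PMMS may force a large welfare sacrifice from the envied agent.

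The paper avoids this entirely by using a \emph{reversed round-robin} (\Cref{alg:pair-demand-pmms}): agents pick one item each in the order $1,\dots,n$, then one more in the order $n,\dots,1$, and leftovers are dumped arbitrarily. The point is that for every ordered pair $(i,j)$, agent $i$ picks at least one item before $j$'s second pick (if $i<j$, the first; if $i>j$, both), which guarantees that among the four items $\{g_i,h_i,x_j,y_j\}$ agent $i$'s bundle is never one of the two ``bad'' pairs $\{a,b\},\{a,c\}$ of your closed form; \Cref{lem:additive-pair-demand-envy} then gives $v_i(\{g_i,h_i\})\ge\mu_i(X_i\cup X_j)$ directly, with no potential function or exchange argument needed. (In the example above this yields $X_1=\{1,4\}$, $X_2=\{2,3\}$, which is PMMS.) If you want to salvage a matching-based route, you would need an objective aligned with fair shares rather than total weight --- but as stated, the construction and the verification step both fail.
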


All of our positive results are not only existential but also constructive: every algorithm presented in this work runs in polynomial time (see \Cref{cor:compute_bivalued,cor:compute_binary,cor:compute_pair_demand}).

\subsection{Related Work}
\label{sec:relatedwork}

In this section, we review the related literature. For a broader overview of the field, see the survey by \citet*{AMANATIDIS2023103965}.

\paragraph{PMMS.} The PMMS notion was introduced by \citet*{unreasonable-fariness}. Since the existence of an exact PMMS allocation remains an open problem, many works have focused on multiplicative approximations of PMMS. An allocation is said to be $\alpha$-PMMS if $v_i(X_i) \geq \alpha \cdot \mu_i(X_i \cup X_j)$ for every pair of agents $i$ and $j$.

The best known approximation of PMMS guaranteed to exist for additive valuations is $0.781$-PMMS, due to the construction by \citet*{kurokawa-phd}. \citet*{unreasonable-fariness} also show that, for additive valuations, an allocation maximizing Nash welfare guarantees a $0.618$-PMMS. Furthermore, \citet*{Amanatidis_2020} show that, under additive valuations, there exists an allocation that is simultaneously $0.618$-EFX and $0.717$-PMMS. \citet*{DBLP:conf/aaai/FeldmanMP24} further demonstrate the existence of an allocation that is $0.618$-EFX, $0.618$-PMMS, and achieves a $0.618$-fraction of the maximum Nash welfare. Additionally, \citet*{DBLP:conf/ijcai/AmanatidisBM18} show that any EFX allocation is $2/3$-PMMS, and any EF1 allocation is $1/2$-PMMS.

PMMS has also been studied in settings with restricted valuation classes. \citet*{BarmanV21} show that a partial PMMS allocation with optimal utilitarian social welfare exists for matroid rank valuations. \citet*{DBLP:conf/atal/BarmanKP24} prove the existence of a $5/6$-PMMS allocation for restricted additive valuations. Moreover, a $4/5$-PMMS allocation exists when agents have additive valuations with identical ordering over goods \cite{DAI2024114388}.

\citet*{sun2023fairness} study the implications between PMMS and other fairness notions in the fair division of chores.

Finally, PMMS has been explored in three different graph-based settings. First, \citet*{DBLP:conf/atal/Hummel024} study PMMS when items form a graph and allocations are restricted to connected subgraphs. Second, \citet*{DAI2024114388} consider a setting with a given social network over agents, where comparisons between bundles are restricted to neighbors in the network; they show that PMMS exists for a path of size three. Third, \citet*{DBLP:journals/corr/abs-2506-20317} analyze a model where agents are nodes and items are edges in a graph, with each item being relevant only to the two agents it connects. They show that PMMS always exists for any multigraph.

\paragraph{EFX.} 
The notion of EFX was introduced by \citet*{unreasonable-fariness}. As with PMMS, significant attention has been devoted to studying multiplicative approximations of EFX. An allocation is said to be $\alpha$-EFX if $v_i(X_i) \geq \alpha \cdot v_i(X_j \setminus \{g\})$ for every pair of agents $i$ and $j$, and every $g \in X_j$.

The state-of-the-art approximation guarantees for EFX are $1/2$-EFX for subadditive valuations \cite{plaut2017envyfreenessgeneralvaluations}, and $0.618$-EFX for additive valuations \cite{Amanatidis_2020}. Further results on EFX approximations appear in \cite{DBLP:conf/sigecom/AmanatidisFS24, DBLP:conf/atal/MarkakisS23, DBLP:conf/atal/BarmanKP24}. The compatibility of (approximate) EFX with efficiency has been investigated in the context of Nash welfare maximization \cite{DBLP:conf/ec/CaragiannisGH19, DBLP:conf/stoc/GargHLVV23, DBLP:journals/corr/abs-2407-12461, DBLP:conf/aaai/FeldmanMP24}, and utilitarian welfare maximization \cite{DBLP:journals/mst/BeiLMS21, DBLP:conf/wine/BarmanB020, DBLP:conf/sigecom/Bu0LST25, DBLP:conf/atal/LiLLTT24}.

EFX has been shown to exist in several important special cases. Notably, EFX allocations exist for three agents \cite{efx-3-agents, DBLP:journals/ior/AkramiACGMM25}, and more generally, for three types of agents \cite{DBLP:conf/sigecom/HVGN025, DBLP:journals/dam/Mahara23}. Moreover, EFX exists under (non-personalized) bivalued valuations, as shown by \citet*{Amanatidis_2021}. \citet*{DBLP:conf/faw/BuSY23} show that EFX always exists for instances with binary-marginal valuations, where $v_i(S \cup \{g\}) - v_i(S) \in \{0,1\}$ for every bundle $S$ and item $g$. The binary-marginal case has been further explored in \cite{DBLP:conf/aaai/BabaioffEF21, DBLP:conf/wine/0002PP020, DBLP:journals/teco/BenabbouCIZ21}.

While constructing a complete EFX allocation is a challenging problem, a partial EFX allocation is trivially achievable; for instance, an empty allocation satisfies EFX. This is often interpreted as setting aside some items and giving them to charity \cite{DBLP:conf/ec/CaragiannisGH19}. Beginning with the work of \citet*{DBLP:journals/siamcomp/ChaudhuryKMS21}, several studies have focused on partial EFX allocations that minimize the number of unallocated items \cite{DBLP:conf/aaai/BergerCFF22, DBLP:conf/sigecom/ChaudhuryGMMM21, DBLP:conf/mfcs/BerendsohnBK22, DBLP:conf/ijcai/JahanSJS23, DBLP:journals/ior/AkramiACGMM25}.

EFX has also been examined in the context of ``best-of-both-worlds'' guarantees for randomized lotteries over allocations. The goal here is to construct (approximately) envy-free distributions over (approximately) EFX allocations \cite{DBLP:conf/sigecom/FeldmanMNP24, DBLP:journals/corr/abs-2410-06877, DBLP:journals/corr/abs-2410-08986}.

Finally, EFX has been studied in chore allocation problems. \citet*{DBLP:conf/stoc/GargMQ25} demonstrate that a constant-factor approximation of EFX exists for additive cost functions, whereas \citet*{ChristoforidisS24} provide an example with monotone cost functions for which no EFX allocation exists. Additional results on EFX for chores can be found in \cite{DBLP:journals/ai/ZhouW24, DBLP:journals/tcs/KobayashiMS25, DBLP:conf/atal/BarmanNV23, DBLP:conf/atal/0001LRS23, DBLP:conf/www/0037L022, DBLP:conf/ijcai/GargMQ23, DBLP:journals/corr/abs-2410-18655}.

\paragraph{Other Fairness Notions.}The fair division literature has also examined other fairness guarantees. Similar to EFX and PMMS, there is extensive research on approximate MMS \cite{DBLP:journals/jacm/KurokawaPW18,DBLP:journals/corr/abs-2502-09377,DBLP:journals/teco/BarmanK20,DBLP:conf/sigecom/GhodsiHSSY18,DBLP:conf/soda/GargMT19,DBLP:journals/talg/AmanatidisMNS17,DBLP:journals/ai/GargT21,DBLP:conf/ijcai/AkramiGST23}. The current best-known approximation is $3/4 + \Omega(1)$, due to \citet*{DBLP:conf/soda/AkramiG24}. In contrast to EFX and PMMS, it is known that MMS allocations may not always exist, and the best-known upper bound on the approximation ratio is $39/40$, established by \citet*{DBLP:conf/wine/FeigeST21}.

\citet*{DBLP:conf/aaai/BarmanBMN18} introduced the notion of groupwise maximin share (GMMS), which strengthens both PMMS and MMS. Since GMMS is stronger than MMS, it also may not always exist. Approximation results for GMMS have been studied in \cite{Amanatidis_2020,DBLP:journals/siamcomp/ChaudhuryKMS21,DBLP:conf/wine/BarmanV21}.

Furthermore, \citet*{DBLP:conf/ijcai/CaragiannisGRSV23} introduced a relaxation of EFX called Epistemic-EFX, whose existence under all monotone valuations was proven by \citet*{DBLP:conf/aaai/AkramiR25}. Finally, the study of fair division with subsidies, where fairness is achieved via monetary transfers among agents, was initiated by \citet*{DBLP:conf/sagt/HalpernS19} and has since been further explored by \cite{DBLP:conf/sigecom/BrustleDNSV20,DBLP:conf/aaai/FeldmanGNP25,DBLP:conf/ijcai/BarmanKNS22,DBLP:conf/wine/CaragiannisI21}.

\section{Preliminaries}
\label{section:preliminaries}

In this work we consider instances of the fair division problem where $N$ is a set of $n$ \emph{agents}, $N = \{1, \ldots, n\}$, and $M$ is a set of $m$ indivisible \emph{items}, $M = \{1, \ldots, m\}$. Each agent $i \in N$ has a valuation function $v_i : 2^M \to \mathbb{R}$. 

Unless stated otherwise, we assume that each agent $i$'s valuation is normalized, so that $v_i(\emptyset) = 0$, and monotone, so that $v_i(S) \le v_i(T)$ for any $S \subseteq T \subseteq M$.  Notably, we do not make these assumptions in \Cref{sec:binary}.

An \emph{allocation} $X = \langle X_1, \ldots, X_n \rangle$ is a partition of the items among the agents, where each item is assigned to exactly one agent. Formally, we require that  $X_1 \cup \cdots \cup X_n = M$ and $X_i \cap X_j = \emptyset$ for all $i \neq j$.

\subsection{Fairness Notions}

We now formally define the EFX condition.

\begin{definition}[EFX Allocations]
	\label{def:efx}
    We say that an allocation $X$ is \emph{envy-free up to any good} (EFX) if, for every pair of agents $i$ and $j$, and for every item $g \in X_j$, we have $v_i(X_i) \ge v_i(X_j \setminus \{g\})$.
    Otherwise, if there exists some item $g \in X_j$ such that $v_i(X_i) < v_i(X_j \setminus \{g\})$, we say that agent $i$ \emph{EFX-envies} agent $j$.
\end{definition}

To introduce the notion of PMMS, we first define the \emph{fair share function} $\mu_i$.

\begin{definition}[Fair Share]
\label{def:mu}
Let $k \in \{2, \ldots, n\}$. The \emph{fair share} of agent $i$ for a bundle $S \subseteq M$ divided among $k$ agents is defined as
\begin{align*}
\mu_i(S, k) = \max_{\substack{X_1, \ldots, X_k \ \text{partition of } S}} \min \{ v_i(X_1), \ldots, v_i(X_k) \}.
\end{align*}
We refer to any partition $X_1, \ldots, X_k$ that achieves the maximum above as a \emph{maximin partition}.
If only a single agent is being considered and it is clear from context, we omit the subscript and write $\mu$ instead of $\mu_i$. Similarly, when $k = 2$, we omit the parameter and write $\mu_i(S)$ to denote $\mu_i(S, 2)$.
\end{definition}

Using the notion of fair shares, we define the PMMS condition.

\begin{definition}[PMMS Allocations]
\label{def:pmms}
An allocation $X$ is \emph{pairwise maximin share fair} (PMMS) if, for every pair of agents $i$ and $j$, it holds that $v_i(X_i) \ge \mu_i(X_i \cup X_j)$. Otherwise, if $v_i(X_i) < \mu_i(X_i \cup X_j)$, we say that agent $i$ \emph{PMMS-envies} agent $j$.
\end{definition}

We also study the related concept of MMS, and compare it to PMMS in \Cref{sec:nonexistence}.

\begin{definition}[MMS Allocations]
\label{def:mms}
An allocation $X$ is \emph{maximin share fair} (MMS) if, for every agent $i$, it holds that $v_i(X_i) \ge \mu_i(M, n)$.
\end{definition}

\subsection{Valuation Classes}

An important class of valuations is the class of \emph{additive} valuations, where for every $S \subseteq M$ and any $g \in M \setminus S$, we have $v(S \cup \{g\}) = v(S) + v(\{g\})$.

We now formally define personalized bivalued valuations.

\begin{definition}[Personalized Bivalued Valuations]
\label{def:personalizedbivalued}
Agent $i$'s valuation is said to be \emph{personalized bivalued} if $v_i$ is additive and there exist values $a_i > b_i \ge 0$ such that $v_i(\{g\}) \in \{a_i, b_i\}$ for every item $g \in M$. Furthermore, a personalized bivalued valuation is called \emph{factored} if either $a_i$ is divisible by $b_i$, or $b_i = 0$.
\end{definition}

Whenever we write $a_i / b_i$  {and $b_i = 0$, we interpret it as a sufficiently large number $K$}.

We also define the class of MMS-feasible valuations, following the definition of \citet*{DBLP:journals/ior/AkramiACGMM25}.

\begin{definition}[MMS-Feasible Valuations] \label{def:mmsfeasibility}
    Agent $i$'s valuation is \emph{MMS-feasible} if, for every subset $S \subseteq M$ and for any two partitions $(X_1, X_2)$ and $(Y_1, Y_2)$ of $S$ such that $X_1 \cup X_2 = Y_1 \cup Y_2 = S$ and $X_1 \cap X_2 = Y_1 \cap Y_2 = \emptyset$, it holds that $\max\{v_i(X_1), v_i(X_2)\} \ge \min\{v_i(Y_1), v_i(Y_2)\}$.
\end{definition}

We define binary-valued valuations as follows. 

\begin{definition}[Binary-Valued Valuations]\label{def:binary}
Agent $i$'s valuation is \emph{binary} if for every subset $S \subseteq M$, it holds that $v_i(S) \in \{0, 1\}$.
\end{definition}

Finally, we define the class of pair-demand valuations.

\begin{definition}[Pair-Demand Valuations]\label{def:pairdemand}
Agent $i$'s valuation is called \emph{pair-demand} if there exist non-negative values $v_{i,1}, \ldots, v_{i,m} \geq 0$ such that, for every subset $S \subseteq M$, it holds that:
\begin{align*}
    v_i(S) = \max_{T \subseteq S, |T| \leq 2} \sum_{j \in T} v_{i,j}.
\end{align*}
\end{definition}

\subsection{Implications of PMMS}\label{sec:implicationsofpmms}

In this section, we briefly examine the relationship between PMMS and EFX. While PMMS does not imply EFX in general, it does imply a slightly weaker variant of EFX, as we discuss below. Moreover, in all non-degenerate instances, PMMS implies the full EFX condition.

The example below demonstrates that PMMS does not necessarily imply EFX in general.

\begin{example}[PMMS Does Not Imply EFX]\label{ex:pmmsdoesnotimplyefx}
    Consider two agents and three items. Both agents have identical additive valuations $v$, where $v(\{1\}) = v(\{2\}) = 0$ and $v(\{3\}) = 2$. Allocating the bundle $\{1\}$ to the first agent and $\{2,3\}$ to the second agent satisfies PMMS. However, it is not EFX, as the first agent would still envy the second even after the removal of item $2$.
\end{example}

The key issue in the example above is the presence of zero-valued items. Consequently, in the context of additive valuations over goods, many works have studied a related notion called envy-freeness up to any \emph{positively valued} good (EFX$_{>0}$), which requires that for every pair of agents $i$ and $j$, we have $v_i(X_i) \ge v_i(X_j \setminus \{g\})$ for any $g \in X_j$ such that $v_i(\{g\}) > 0$. It is known that PMMS implies EFX$_{>0}$ for any additive valuation \cite{unreasonable-fariness}. Furthermore, the definition of EFX$_{>0}$ has been extended to non-additive and non-monotone valuations by \citet*{abs-2502-02815}, where the implication from PMMS to EFX$_{>0}$ continues to hold.

In addition, \Cref{ex:pmmsdoesnotimplyefx} is a degenerate instance in the sense that agents assign equal value to multiple distinct bundles. Notably, \citet*{efx-3-agents} observe that establishing the existence of EFX allocations for all non-degenerate instances implies their existence for all instances in general. Therefore, resolving the EFX problem for non-degenerate instances would settle the general EFX problem.
Crucially, in non-degenerate instances, PMMS implies EFX.

\section{Non-Existence of PMMS}\label{sec:nonexistence}

In this section, we examine the conditions necessary for the existence of PMMS allocations. We begin by discussing the two-agent case and then proceed to the non-trivial case of three agents.

In the case of two agents, the PMMS condition is equivalent to MMS (\Cref{def:mms}). It has been noted by \citet*{DBLP:journals/ior/AkramiACGMM25} that MMS allocations for two agent exist when one agent has an MMS-feasible valuation and the other has an arbitrary valuation. This condition is necessary, as an MMS allocation may not exist for two agents if both valuations are arbitrary.

While one might expect that a non-existence result for two agents would automatically extend to settings with three or more agents, this extension is not immediate.
For example, simply adding a dummy agent who values every subset at $0$ to the two-agent counterexample is not sufficient.
Intuitively, adding such a dummy agent is equivalent to allowing some items to go to charity, as considered by \citet*{DBLP:conf/ec/CaragiannisGH19}.
We present a non-trivial extension of the non-existence result beyond the two-agent case.
To the best of our knowledge, this is the first known construction demonstrating the non-existence of a PMMS allocation for more than two agents.

\begin{proposition}[Non-Existence of PMMS]\label{prop:pmmsnonexistance}
   For any $n \geq 2$, there exists an instance with $n$ agents, each with monotone valuations, and $n + \mathcal{O}(\log n)$ items, in which no PMMS allocation exists.
Moreover, an MMS allocation exists in this instance.
\end{proposition}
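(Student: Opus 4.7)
My plan is to construct the witnessing instance in two phases: a carefully designed base instance for three agents, followed by an extension to arbitrary $n \geq 3$. (The case $n = 2$ is subtle, since PMMS and MMS coincide there and the ``moreover'' clause must be read accordingly; I will focus on $n \geq 3$, where the separation between PMMS and MMS is genuine.) For the base, I would build three monotone valuations on a constant number of items. Because PMMS existence for three agents with MMS-feasible valuations is open, the construction must rely on valuations that violate MMS-feasibility. A natural template is a ``demand''- or ``threshold''-style valuation, where the value of a bundle is the maximum value over a family of target subsets, producing super-additive jumps. The three agents' target families would be interlocked cyclically so that (i) in every allocation, some pair $(i,j)$ admits a repartition of $X_i \cup X_j$ that agent $i$ strictly prefers to $X_i$, yet (ii) the global maximin $\mu_i(M, 3)$ remains low enough that a single balanced allocation meets MMS for all three agents simultaneously. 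The non-existence of PMMS then reduces to a finite case analysis over allocations up to symmetry, and MMS existence is shown by direct verification of the intended balanced allocation.

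For the extension to $n > 3$, I would introduce $n - 3$ auxiliary agents together with $O(\log n)$ additional items organized via a binary-indexing scheme. Each auxiliary agent is tagged by a unique identifier of length $\lceil \log_2 (n-3) \rceil$, and their monotone valuation is satisfied by the subset of identifier items encoding that tag. A naive dummy-agent argument fails, as noted in the paragraph preceding the proposition, so the logarithmic identifier structure is essential: it lets all auxiliary agents be satisfied simultaneously by a single consistent assignment of identifier items while preserving the base three-agent PMMS conflict on the remaining items. PMMS non-existence is thereby inherited from the base subinstance, and the MMS allocation extends by appending the natural identifier assignment. Care is needed to verify that no \emph{new} PMMS violations arise between auxiliary agents or between an auxiliary and one of the three original agents, but the fact that auxiliaries' valuations concentrate on the identifier items (and zero on the rest) keeps these constraints tractable.

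The main obstacle is the base construction itself: designing three monotone valuations so that \emph{every} allocation of $M$ violates PMMS for some pair, while some specific allocation still satisfies the strictly weaker MMS criterion. The valuations must be delicately non-MMS-feasible — breaking feasibility just enough to kill PMMS universally, without also breaking the more robust MMS condition. Concretely, the super-additive jumps in the demand-style valuations must be large enough to push every pairwise fair share $\mu_i(X_i \cup X_j)$ above $v_i(X_i)$ in every allocation, yet $\mu_i(M,3)$ must stay below the value of a simple balanced partition — a tension that drives the search for the right family of target subsets.
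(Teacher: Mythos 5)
Your plan has a genuine gap in the extension step, and it also misreads what the base construction needs to be. On the extension: auxiliary agents whose valuations ``concentrate on the identifier items (and zero on the rest)'' act exactly as the charity sinks that the paragraph before the proposition warns about. Nothing in your scheme forces the base items to stay with the three base agents: an adversarial allocation can hand each base item to a \emph{distinct} auxiliary agent (there are $n-3$ of them and only constantly many base items). A base agent then holds an empty or tiny bundle, and it PMMS-envies an auxiliary agent $j$ only if $X_i \cup X_j$ can be split into two parts \emph{both} of positive value to $i$ --- impossible when $X_j$ is a singleton. Auxiliary agents with threshold valuations on their tag subset never have $\mu_i(X_i\cup X_j)>0$ either, since two disjoint parts cannot both contain the tag. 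So this ``spread thinly'' allocation is PMMS and your instance admits one. A related symptom is the item count: your construction has $O(1)+O(\log n)$ items rather than $n+O(\log n)$, so empty bundles are unavoidable and the escape above cannot be closed. The paper's construction is built precisely to block this: it introduces $n-2$ \emph{star} items, each worth $k$ to every agent, with any star-plus-anything bundle jumping to $2k$. This forces (in any candidate PMMS allocation) exactly $n-2$ agents to take one star each and nothing else, and corners the remaining two agents into splitting a pool of $2k$ common items --- where the conflict is engineered.

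The second point is that you do not need a three-agent base in which \emph{every} allocation violates PMMS; that is essentially the paper's Theorem 1, a delicate computer-assisted construction, and treating it as a prerequisite makes the proposition much harder than it is. The paper's core conflict is a two-agent one: each agent $i$ is assigned a distinct balanced partition $(A_i,B_i)$ of the $2k$ common items (possible for all $n$ agents since $\binom{2k}{k}\ge 2n$ with $k=O(\log n)$) and values $A_i$ and $B_i$ at $k+1$ but every other all-common bundle at its cardinality $\le 2k$. Whichever two agents end up splitting the common pool, at least one fails to receive a half of its own partition and gets value at most $k<k+1=\mu_i$. The MMS part then follows because no agent can achieve $k+1$ in all $n$ bundles of an $n$-partition (the star bundles cap at $k$), while the star-to-$(n-2)$-agents, common-pool-to-two-agents allocation gives everyone $k$. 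Your intuition that the valuations must be non-MMS-feasible while keeping $\mu_i(M,n)$ low is correct, but the mechanism you propose does not realize it.
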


\begin{proof}
   Let $n \ge 2$, and choose $k$ such that $\binom{2k}{k} \ge 2n$. We construct an instance with $m = 2k + n - 2$ items. Note that it suffices to take $k = \mathcal{O}(\log n)$, and therefore $m = n + \mathcal{O}(\log n)$.
We refer to the first $n - 2$ items as \emph{stars}, denoted by $s_1, \ldots, s_{n-2}$, and the remaining $2k$ items as \emph{common} items, denoted by $c_1, \ldots, c_{2k}$.
    
    For each agent $i$, we fix a partition $(A_i, B_i)$ of the common items ${c_1, \ldots, c_{2k}}$ such that the partitions are unique across agents. More formally, for agent $i$, we require that $|A_i| = |B_i| = k$, $A_i \cap B_i = \emptyset$, and for every pair of distinct agents $i$ and $j$, it holds that $A_i \notin \{A_j, B_j\}$.
This is always possible because we have chosen $2k$ such that $\binom{2k}{k} \geq 2n$.
    
    We then set the valuation $v_i$ so that:
    \begin{enumerate}[noitemsep, topsep=0pt,label={\textit{(\roman*)}}]
        \item Each star item has value $k$; that is, $v_i(\{s_j\}) = k$ for all $j \in \{1, \ldots, n-2\}$.
        \item The special bundles $A_i$ and $B_i$ have value $k+1$; that is, $v_i(A_i) = v_i(B_i) = k +1$.
         \item Any bundle containing at least one star and at least one additional item has value $2k$; that is, $v_i(X) = 2k$ for any $X$ such that $|X| \geq 2$ and $s_j \in X$ for some $j \in \{1, \ldots, n-2\}$.
        \item Any other bundle $X$ consisting only of common items has value equal to its cardinality, i.e., $v_i(X) = |X|$ for all $X \subseteq \{c_1, \ldots, c_{2k}\}$ and $X \notin \{A_i, B_i\}$.
    \end{enumerate}

    We now show that no PMMS allocation exists for this instance. Fix an arbitrary allocation $X = (X_1, \ldots, X_n)$, and consider the following cases.

    \textit{Case 1: Some agent $i$ gets an empty bundle.}
    This violates the PMMS condition, since the agent with the empty bundle would envy any other agent who receives at least two items. Given that the total number of items is $2k + n - 2 \geq n$, there must be at least one such agent.
Any bundle containing two or more items can be partitioned into two non-empty bundles, each with positive value, which gives agent $i$ a higher fair share value than zero.
    
    \textit{Case 2: Some agent $i$ gets at least one star and at least one commmon item.}
Since there are only $n - 2$ stars and $n$ agents, at least two agents, say $j$ and $r$, receive no stars.
Additionally, since there are only $2k$ common items and one of them is allocated to agent $i$, either $j$ or $r$ must receive at most $k - 1$ common items.
Assume without loss of generality that agent $j$ receives no star and at most $k - 1$ common items.
Note that $v_j(X_j) = |X_j|$.
Agent $j$ would PMMS-envy agent $i$, since $X_i \cup X_j$ can be partitioned into two bundles: one containing the star (valued at $k$) and the other containing $|X_j| + 1$ common items (valued at $|X_j| + 1$).
Both bundles would be strictly more valuable than $v_j(X_j)$, implying that agent $j$ PMMS-envies agent $i$.

    \textit{Case 3: Some agent $i$ gets at least two stars.}
    Since there are only $n - 2$ stars in total and agent $i$ receives two of them, at least three agents must receive no stars. With only $2k$ common items to distribute among these three agents, at least one of them, say, agent $j$, must receive at most $k - 1$ common items. Note that $v_j(X_j) = |X_j|$. Agent $j$ would then PMMS-envy agent $i$, since the combined bundle $X_i \cup X_j$ can be partitioned into two bundles, each containing a star (valued at least $k$). Both bundles would have a value strictly greater than $v_j(X_j)$, implying that agent $j$ PMMS-envies agent $i$.
      
    \textit{Case 4: All agents who receive a star receive no common items.} In this scenario, exactly $n - 2$ agents each receive a single star and nothing else, while the remaining two agents, say $i$ and $j$, divide all $2k$ common items between themselves.
Note that $\mu_i(X_i \cup X_j) = \mu_i({c_1, \ldots, c_{2k}}) = \max\{v_i(A_i), v_i(B_i)\} = k + 1$, and similarly, $\mu_j(X_i \cup X_j) = k + 1$.
However, since the partitions $(A_i, B_i)$ and $(A_j, B_j)$ are unique, it must be that either $X_i \notin \{A_i, B_i\}$ or $X_j \notin \{A_j, B_j\}$.
In the first case, we have $v_i(X_i) < \mu_i(X_i \cup X_j)$; in the second case, $v_j(X_j) < \mu_j(X_i \cup X_j)$.
In either case, the PMMS condition is violated.

   We can now conclude that there is no PMMS allocation in the constructed instance.

 Moreover, we show that for each agent $i$, the MMS share is $\mu_i(M, n) = k$.
To establish this, it suffices to argue that there is no partition of the items into $n$ bundles such that every bundle is valued at $k + 1$ by agent $i$.
Indeed, since there are only $n - 2$ stars, two of the bundles must contain no stars.
The only way for those two bundles to each be valued at $k + 1$ is if the common items are split according to $(A_i, B_i)$.
However, in that case, the remaining $n - 2$ bundles, each containing a single star, are valued exactly $k$, which proves the claim.

We can guarantee that every agent receives a bundle valued at least $k$ by allocating the $n - 2$ stars to $n - 2$ agents and dividing the common items equally between the two remaining agents.
Therefore, an MMS allocation exists.
\end{proof}

The construction in the proof of \Cref{prop:pmmsnonexistance} uses $n + \mathcal{O}(\log n)$ items.
An interesting open problem is to determine the minimal number of items required for such a construction; specifically, whether the result can be improved to include only $n + \mathcal{O}(1)$ items; see \Cref{sec:conclusion}.

\begin{figure}[t]
    \centering
    \begin{minipage}{0.49\textwidth}
        \centering  
         \usetikzlibrary {shapes.geometric}
  \begin{tikzpicture}
        \node[circle, draw] (n1) at (0,0) {1};
        \node[circle, draw] (n2) at (2.5,0) {1};
        \node[circle, draw] (n3) at (2.5,2.5) {1};
        \node[circle, draw] (n4) at (0,2.5) {1};

      \draw[blue] (n1) to node[below] {3} (n2);
      \draw[blue] (n3) to node[above] {3} (n4);
      \draw[red] (n1) to node[left] {3} (n4);
      \draw[red] (n2) to node[right] {3} (n3);
    \end{tikzpicture}

            \smallbreak
    (i) An instance with two agents.
        \end{minipage}
            \hfill
    \hfill
    \begin{minipage}{0.49\textwidth}
        \centering
    \usetikzlibrary {shapes.geometric}
  \begin{tikzpicture}
        \node[circle, draw] (n1) at (0,0) {1};
        \node[circle, draw] (n2) at (2.5,0) {1};
        \node[circle, draw] (n3) at (2.5,2.5) {1};
        \node[circle, draw] (n4) at (0,2.5) {1};
        \node [star, minimum size=1mm, draw, inner sep=1mm, fill=yellow]
          (n5) at (4,1.25) {2};

      \draw[blue] (n1) to node[below] {3} (n2);
      \draw[blue] (n3) to node[above] {3} (n4);
      \draw[green!80!black] (n1) to node[right=0.3] {3} (n3);
      \draw[green!80!black] (n2) to node[left=0.3] {3} (n4);
      \draw[red] (n1) to node[left] {3} (n4);
      \draw[red] (n2) to node[right] {3} (n3);
    \end{tikzpicture}
  \smallbreak
  (ii) An instance with three agents.
  \end{minipage}

  \caption{Illustrations of the instances used in the proof of \Cref{prop:pmmsnonexistance}.
In both constructions, we set $k = 2$.
Nodes in the graph represent items, where circles denote common items.
The numbers inside the nodes indicate how much each agent values the corresponding item.
Red edges connect pairs of items that the first agent values at 3; similarly, blue and green edges indicate pairs valued at 3 by the second and third agents, respectively. }
      \label{fig:pmms-3-agents-counterexample}
\end{figure}

As shown by \citet*{DBLP:journals/ior/AkramiACGMM25}, an EFX allocation always exists for three agents if at least one of them has an MMS-feasible valuation and the other two have monotone valuations.
We show that these assumptions are not sufficient to guarantee the existence of a PMMS allocation.
Specifically, we present an example with two agents having monotone valuations and one agent with an additive (hence MMS-feasible) valuation, where no PMMS allocation exists.
This establishes a separation between the two fairness notions and suggests that the proof technique used to establish the existence of EFX allocations for three agents does not extend to PMMS.
The proof of the following theorem is deferred to \Cref{sec:proofsnonexistence}.

\thmseparation*

The instance used in the proof of \Cref{thm:separation} was discovered through a computational search over the space of possible valuations.
Considerable effort was devoted to simplifying the counterexample, as the initial instances had complex and unintuitive structure.
The counterexample presented in this work is notably simpler, involving only few distinct values and handling cases with uneven bundle sizes in an elegant way.

\section{Personalized Bivalued Valuations}
\label{sec:bivalued}

In this section, we analyze the case of personalized bivalued valuations, where the valuations are additive and, for each agent $i$, there exist some $a_i > b_i \geq 0$ such that $v_i(\{g\}) \in \{a_i, b_i\}$ for all items $g$.
Our main theorem for this setting is as follows.

\thmbivalued*

\begin{figure}[p]               
\centering

\begin{minipage}{\textwidth}

\begin{algorithm}[H]
     \SetKwInOut{Input}{Input}
  \SetKwInOut{Output}{Output}
    \LinesNumbered
    \DontPrintSemicolon
    \Input{A personalized bivalued instance $I = \langle N, M, \mathcal{V} \rangle$}      
    \Output{An EFX allocation $X = \langle X_1, \ldots, X_n \rangle$}

    Set $P \gets M$ \tcp*{$P$ is the set of unallocated items}

    {Set $w_i \gets 0$ for $i\in \{1, \ldots, n\}$ \tcp*{$w_i$ is agent {$i$'s} priority}}

    Set $L_r \gets N$ for $r \in \{1, \ldots, m\}$ \tcp*{$L_r$ is the set of agents active in round $r$}

    Set $X_i \gets \emptyset$ for $i \in \{1, \ldots, n\}$ \tcp*{$X_i$ is the allocation of agent $i$}

    \While{$P \neq \emptyset$}{
        Set $r$ to be the current round (iteration) number

        Construct a weighted bipartite graph $G$ between $L_r$ and $P$

        \For{each $i \in L_r$ and $g \in P$ such that $v_i(\{g\}) = a_i$}{
            Add an edge from $i$ to $g$ with weight $a_i / b_i$
        }

        Find a maximal (in terms of size) matching in $G$ that maximizes the total weight\label{line:matching}

        \For{each matched pair $(i, g)$}{
            Add $g$ to $X_i$

            Remove $g$ from $P$
        }

        \For{each connected component $C$ of $G$}{
            Let $U$ be the set of unmatched agents in $C$

            \If{$U \neq \emptyset$}{\label{line:if_unmatched}
                Let $t$ be the maximum value of $a_i / b_i$ for $i \in U$\label{lin:freezingtime}

                Remove all matched agents of $C$ from $L_{r+j}$ for $j \in \{1, \ldots, \lfloor t - 1 \rfloor\}$ \tcp*{Freeze matched agents}

                {$w_i\gets r$ for each matched $i$ in $C$} \label{lin:settingpriority}
            }
        }

        \For{each unmatched agent $i$ {in increasing order w.r.t. $w_i$}}{ \label{lin:unmatched_order}
            Add any remaining good $g$ to $X_i$\label{line:remitems}

            Remove $g$ from $P$
        }
    }

    \Return{$X = \langle X_1, \ldots, X_n \rangle$}

    \caption{Personalized Match-and-Freeze Algorithm}
    \label{alg:personalized-bivalued-pmms}
\end{algorithm}

\end{minipage}

\vspace{1em}

\begin{minipage}{\textwidth}
\renewcommand{\arraystretch}{1.4} 
\newcommand{\circled}[1]{
  \tikz[baseline=-0.8ex]            
    \node[draw,circle,inner sep=1pt]{#1};}

\newcommand{\crosscell}{%
  \tikz[baseline=(X.base)] \node[draw=black, line width=0.4mm, cross out, minimum width=1em, minimum height=1em, inner sep=0pt] (X) {};
}
\begin{center}
    \begin{tabular}{|c|c|c|c|c|c|c|}
  \hline
  \textbf{Agents} & \textbf{Round 1} & \textbf{Round 2} & 
  \textbf{Round 3} & \textbf{Round 4} & \textbf{Round 5} & \textbf{Round 6} \\ \hline
  Agent 1 ({$a_1 = 2.5$}) & 1 & 1 & 1 & 1 & 1 & 1 \\ \hline
  Agent 2 ($a_2 = 3$) & \circled{3} & ---
  & 1 & 1 & 1 &   \\ \hline
  Agent 3 ($a_3 = 4$) & 1 & 1 & 1 & 1 & 1 & 1 \\ \hline
  Agent 4 ($a_4 = 5$) & \circled{5} & --- & --- & --- & 1 & \\ \hline
\end{tabular}
  \end{center}
   \captionof{table}{An example execution of \Cref{alg:personalized-bivalued-pmms} is presented for an instance with 4 agents and 18 items. The agents have bivalued valuations with $a_1 = 2.5$, $a_2 = 3$, $a_3 = 4$, $a_4 = 5$, and $b_i = 1$ for $i \in \{1, 2, 3, 4\}$. The items are ${x, y, z_1, \ldots, z_{16}}$. Agents 1 and 2 assign the higher value to item $x$ and no other item, while agents 3 and 4 assign the higher value to item $y$ and no other item.
Each cell represents the value of the item that a given agent received in a particular round. 
Cells with a circled value indicate rounds in which an agent received a high-value item. Cells corresponding to rounds in which an agent is frozen contain a dash, and cells for rounds after the last item has been allocated are left empty.}   
  \label{tab:allocation-rounds}
\end{minipage}

\end{figure}

The remainder of this section is devoted to proving \Cref{thm:bivalued}. Specifically, we show that \Cref{alg:personalized-bivalued-pmms} produces an EFX allocation for any personalized bivalued instance, and a PMMS allocation if the instance is factored. \Cref{alg:personalized-bivalued-pmms} is a modified version of the Match-and-Freeze algorithm proposed by \citet*{Amanatidis_2021}, which was designed to compute EFX allocations for non-personalized bivalued valuations.

We begin by proving the following property of the allocation produced by the algorithm.

\begin{lemma}
    \label{lem:max-le-min}
     Consider an execution of \Cref{alg:personalized-bivalued-pmms}, and fix a round $r$. Let $C$ be any connected component of the graph $G$ constructed during round $r$. If the set $U$ of unmatched agents in $C$ is nonempty and $U \neq C$, then it holds that:
    \begin{align*}
    \max\{a_i/b_i \colon i\in U\} \le \min\{a_i/b_i \colon i\in C\setminus U\}.
    \end{align*}
\end{lemma}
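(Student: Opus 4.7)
The plan is to argue by contradiction using a classical alternating-path swap. Suppose for contradiction that there exist $u \in U$ and $v \in C \setminus U$ with $a_u / b_u > a_v / b_v$; the goal is to produce a matching of the same cardinality but strictly greater total weight, contradicting the algorithm's choice on Line~\ref{line:matching}.

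Because $u$ and $v$ both lie in the connected component $C$ of $G$, there is a walk between them in $G$. I would extract from such a walk an alternating path $u, g_1, i_1, g_2, i_2, \ldots, g_k, v$ whose edges alternate between non-matching and matching, starting with the non-matching edge incident to the unmatched agent $u$ and ending with the matching edge incident to the matched agent $v$. The existence of such a path rests on two facts: $C$ is connected in $G$, and the algorithm's matching $M$ has maximum cardinality, so that no alternating walk from $u$ can terminate at an unmatched vertex (otherwise we would have an augmenting path, contradicting maximum cardinality of $M$).

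Given the alternating path, I would perform the standard swap: delete the $k$ matching edges on the path and insert the $k$ non-matching edges on it. The result is a matching $M'$ of the same cardinality as $M$, and the weight change telescopes to
\begin{align*}
    w(M') - w(M) \;=\; \frac{a_u}{b_u} + \sum_{j=1}^{k-1}\frac{a_{i_j}}{b_{i_j}} \;-\; \sum_{j=1}^{k-1}\frac{a_{i_j}}{b_{i_j}} \;-\; \frac{a_v}{b_v} \;=\; \frac{a_u}{b_u} - \frac{a_v}{b_v} \;>\; 0,
\end{align*}
since each intermediate matched agent $i_j$ appears once as an endpoint of a removed edge $(g_j, i_j) \in M$ (weight $a_{i_j}/b_{i_j}$) and once as an endpoint of an inserted edge $(i_j, g_{j+1}) \notin M$ (weight $a_{i_j}/b_{i_j}$), so their contributions cancel. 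Hence $M'$ is a maximum-cardinality matching of strictly larger weight than $M$, contradicting the choice on Line~\ref{line:matching}.

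The step I expect to be the main obstacle is justifying the existence of the alternating path from $u$ to $v$, since naive alternating BFS from $u$ need not reach every vertex of $C$. A clean workaround is to choose, among all violating pairs in $C$, one minimizing the distance in $G$ between the unmatched endpoint and the matched one. For that minimizing pair, the shortest path in $G$ must already be alternating: any non-alternating transition along it would either reveal a closer violating pair---contradicting the choice of $(u, v)$---or produce an augmenting path from $u$, contradicting the maximum cardinality of $M$. Making this extremal argument precise, together with the bipartite structure of $G$, is where the technical care of the proof will lie.
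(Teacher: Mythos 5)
Your proof follows the same route as the paper's: assume a violating pair $(u,v)$ with $u$ unmatched, $v$ matched, and $a_u/b_u > a_v/b_v$; find an alternating path from $u$ to $v$; and shift the matching along it to gain $a_u/b_u - a_v/b_v > 0$ in weight at equal cardinality, contradicting the choice in \Cref{line:matching}. The paper performs exactly this telescoping swap and, like you, derives the path from the fact that $u$ and $v$ lie in the same component---but it asserts the path's existence without further argument. You are right to single this step out as the real obstacle: connectivity of $C$ does not by itself yield an alternating path from $u$ to every matched agent of $C$.

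Unfortunately, your extremal workaround does not close the gap, and no argument of this kind can, because the required path may genuinely fail to exist while the matching is already optimal. Take three active agents with ratios $a_1/b_1 = a_2/b_2 = 10$ and $a_3/b_3 = 1.5$, and two high-value items: $g_1$ is high-value for all three agents, while $g_2$ is high-value only for agent $3$. The matching $\{(2,g_1),(3,g_2)\}$ has maximum cardinality and maximum weight ($10+1.5$, tied with $\{(1,g_1),(3,g_2)\}$), so \Cref{line:matching} may return it. The component is all of $\{1,2,3,g_1,g_2\}$, agent $1$ is unmatched, and $(1,3)$ is a violating pair; yet the only simple path from $1$ to $3$ is $1, g_1, 3$, which uses two consecutive non-matching edges. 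The non-alternating transition at $g_1$ reveals no closer violating pair (the matched partner of $g_1$ is agent $2$, whose ratio equals agent $1$'s) and yields no augmenting path (the alternating reachable set from agent $1$ is only $\{1,g_1,2\}$), so your shortest-path argument stalls exactly here. Since this matching is already maximum-weight among maximum-cardinality matchings, no swap whatsoever produces a contradiction; indeed the lemma's conclusion fails for this tie-breaking, as $\max\{a_i/b_i \colon i \in U\} = 10 > 1.5 = \min\{a_i/b_i \colon i \in C\setminus U\}$. The gap you identified is therefore real, is shared by the paper's own proof, and repairing it requires either a stronger selection rule in \Cref{line:matching} or weakening the lemma to matched agents reachable from $U$ by alternating paths.
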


\begin{proof}
Fix a round $r$ and a component $C$ of the graph $G$, with a set of unmatched agents $U$.
Suppose, for contradiction, that there exists an unmatched agent $j \in U$ and a matched agent $i \in C \setminus U$ such that $a_i/b_i < a_j/b_j$.
Since $i$ and $j$ belong to the same component, there exists an alternating path $i_0, g_1, i_1, g_2, i_2, \ldots, g_k, i_k$ in $G$, where $i_0 = j$, $i_k = i$, and $(i_w, g_{w+1}) \in G$ for all $0 \leq w < k$, and $(i_w, g_w) \in G$ for all $0 < w \leq k$.
Along this path, agents $i_1, i_2, \ldots, i_k$ are matched to items $g_1, g_2, \ldots, g_k$, respectively. See \Cref{fig:alternating-path-contradiction} for an illustration of the alternating path used in the argument.

Notice that we can increase the weight of the maximal matching by reassigning items $g_1, g_2, \ldots, g_k$ to agents $i_0, i_1, \ldots, i_{k-1}$, respectively. This increases the total weight because the weight of each edge $(i_w, g_w)$ is equal to the weight of the edge $(i_w, g_{w+1})$ for all $0 < w < k$, and the weight of the edge $(i_0, g_1)$ is greater than that of the edge $(i_k, g_k)$.
This contradicts the choice of the matching in \Cref{line:matching}, completing the proof of the lemma.
\end{proof}

Next, we use \Cref{lem:max-le-min} to derive the following technical lemma, which plays a key role in our analysis. The lemma and its proof are similar to the analysis of the Match-and-Freeze algorithm for non-personalized bivalued valuations, as presented by \citet*[Lemma 4.2]{Amanatidis_2021}. The proof is deferred to \Cref{sec:proofsfrombivalued}.

\begin{restatable}{lemma}{lemprincipallemma}
    Consider an execution of \Cref{alg:personalized-bivalued-pmms}, and fix an agent $i$. Let $r_i$ denote the last round in which an item $g$ with $v_i(\{g\}) = a_i$ was allocated to some agent. Let $X_{i,r}$ denote the singleton set allocated to agent $i$ in round $r$, or the empty set if agent $i$ does not receive any item in that round. Let $F_i$ denote the set of rounds during which agent $i$ is not active. Then:
\begin{enumerate}[label=(\roman*)]
\item We have $v_i(X_{i,r}) = a_i$ for all $r \in \{1, 2, \ldots, r_i - 1\}$.
\item If $F_i \neq \emptyset$, then $v_i(X_{i,r_i}) = a_i$ and $F_i \subseteq \{r_i+1, \ldots, r_i + \lfloor a_i/b_i -1\rfloor\}$.
\item If $v_i(X_{i,r_i}) = a_i$ and $v_i(X_{j,r_i}) = a_i$ for some agent $j$, then $F_i = F_j$.
\end{enumerate}
    \label{lem:principal-lemma}
\end{restatable}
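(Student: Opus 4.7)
I will prove (i), (ii), and (iii) together, with \Cref{lem:max-le-min} as the principal structural tool. The starting observation, implied by the definition of $r_i$, is that for every round $r \le r_i$ the item eventually allocated in round $r_i$ -- call it $g^\ast$ -- is still in the pool $P$ and satisfies $v_i(\{g^\ast\})=a_i$, so whenever $i$ is active in such a round, $G$ contains at least one edge incident to $i$ (to $g^\ast$ and to every other item in $P$ that $i$ values at $a_i$).

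For (i), the plan is to show by induction on $r\in\{1,\dots,r_i-1\}$ that $i$ is active and matched in round $r$. Being matched immediately yields $v_i(X_{i,r})=a_i$ since edges of $G$ only connect $i$ to items it values at $a_i$. Activity is argued using the freezing rule together with the inductive hypothesis: any prior freezing event must have been triggered by matching $i$ in an earlier round $r^\circ$ where $i$'s component had a non-empty unmatched set, but (i) applied inductively places $i$ matched at every earlier round $< r$, so any freezing trigger must occur at round $\ge r_i$, putting the freezing window strictly after $r_i$. To rule out $i$ being active but unmatched, I would apply the alternating-path rotation used in the proof of \Cref{lem:max-le-min}: if $i$ were unmatched in its component $C$ and $k\in C$ were matched, rotating the matching along an $i$-to-$k$ alternating path changes the total weight by $a_i/b_i - a_k/b_k\le 0$, so every matched $k\in C$ satisfies $a_k/b_k=a_i/b_i$; a canonical tie-breaking of the maximum-weight matching (e.g., among all maximum-weight matchings, choosing one that matches as many agents as possible in each component) then forces $i$ itself to be matched, closing the induction.

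For (ii), let $r^\ast=\min F_i$. Freezing of $i$ can only be triggered by $i$ being matched in an earlier round $r'<r^\ast$ inside a component $C'$ with non-empty unmatched set $U'$; the resulting freezing window is $\{r'+1,\dots,r'+\lfloor t'-1\rfloor\}$ with $t'=\max\{a_j/b_j:j\in U'\}$. \Cref{lem:max-le-min} gives $t'\le a_i/b_i$, so $F_i\subseteq\{r'+1,\dots,r'+\lfloor a_i/b_i-1\rfloor\}$. Since $i$ received a high-value-for-$i$ item in round $r'$ we have $r'\le r_i$. Conversely, if $r'<r_i$, then part (i) forces $i$ to be matched, and therefore active, in round $r'+1$, contradicting $r'+1\in F_i$. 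Hence $r'=r_i$, and this identification yields both $v_i(X_{i,r_i})=a_i$ and the stated inclusion for $F_i$.

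For (iii), suppose $v_i(X_{i,r_i})=v_i(X_{j,r_i})=a_i$. Both items carry an edge from $i$ in $G$ during round $r_i$, placing them in $i$'s connected component $C$; further, $j$ must be matched in round $r_i$ to $X_{j,r_i}$ (not receive it as a leftover pick, since after round $r_i$ no high-value-for-$i$ item is allocated), so $j\in C$ as well. The freezing rule applied to $C$ in round $r_i$ therefore treats $i$ and $j$ identically: either $C$'s unmatched set $U$ is empty -- in which case neither is frozen and $F_i=F_j=\emptyset$ -- or $U\ne\emptyset$, in which case both are frozen with the same window $\{r_i+1,\dots,r_i+\lfloor t-1\rfloor\}$ determined by the common $t=\max\{a_k/b_k:k\in U\}$. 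In either case $F_i=F_j$. The main obstacle is the unmatched-case sub-argument in (i): \Cref{lem:max-le-min} yields only $a_i/b_i\le a_k/b_k$ non-strictly, so making ``$i$ can be taken to be matched'' rigorous requires specifying a canonical tie-breaking of the maximum-weight matching and verifying that this choice meshes correctly with the priority-ordered leftover pick on line~\ref{lin:unmatched_order} used in subsequent rounds.
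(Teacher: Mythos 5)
The central gap is in your treatment of the ``active but unmatched'' case in part (i), and it propagates to (ii) and (iii). You try to rule this case out by rotating the matching along an alternating path from $i$ to a matched agent $k$ and comparing weights; this only yields $a_i/b_i \le a_k/b_k$, so you are forced to posit a ``canonical tie-breaking'' that \Cref{alg:personalized-bivalued-pmms} does not specify. The argument you actually need is simpler and uses \emph{size}, not weight: the item $g^\ast$ allocated in round $r_i$ is still in $P$ and unmatched in every round $r<r_i$ (had it been matched, it would have been allocated in that round), so if $i$ is active and unmatched, the single edge $(i,g^\ast)$ augments the matching, contradicting the size-maximality required in \Cref{line:matching}. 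No tie-breaking is needed. The same free item $g^\ast$ is also what rules out freezing before round $r_i$: if $i$ is matched in a round $r^\circ<r_i$ and some agent $h$ in its component is unmatched, one rotates the alternating path from $h$ to $i$ and additionally matches $i$ to $g^\ast$, increasing the matching size by one. Your activity argument instead claims that ``$i$ matched at every earlier round'' implies no freezing trigger before $r_i$; but freezing is triggered precisely when $i$ \emph{is} matched and its component contains an unmatched agent, so being matched does not preclude being frozen, and this step is a non sequitur.

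Two further consequences. In (ii), the appeal to part (i) to force $r'=r_i$ only excludes trigger rounds $r'\le r_i-2$, since part (i) says nothing about round $r_i$ itself; the case $r'=r_i-1$, in which $i$ would be frozen in round $r_i$ and the conclusion $v_i(X_{i,r_i})=a_i$ would fail, is not ruled out. Again, the augmentation with $g^\ast$ is what excludes it. In (iii), you assert that $j$ cannot have received $X_{j,r_i}$ as a leftover ``since after round $r_i$ no high-value-for-$i$ item is allocated,'' but the leftover assignment in \Cref{line:remitems} happens \emph{within} round $r_i$, so this does not dispose of the case. The paper handles it separately: if $X_{j,r_i}$ is an unmatched item adjacent to the matched agent $i$, then $i$'s component can contain no unmatched agent (otherwise the path-plus-free-item augmentation applies), whence $F_i=F_j=\emptyset$.
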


\usetikzlibrary{positioning}  %

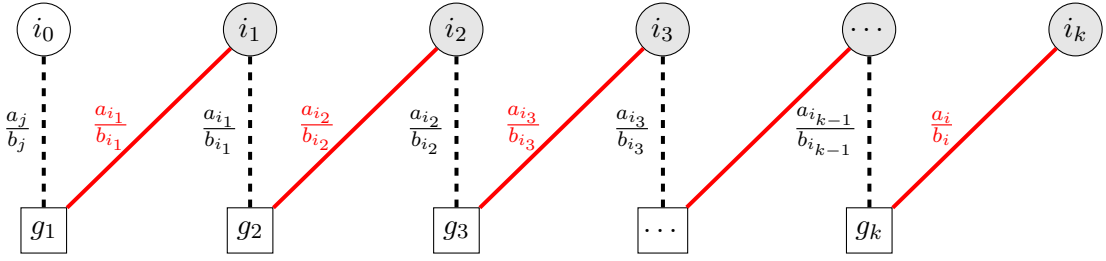
\begin{figure}
  \centering
  \begin{tikzpicture}[
      scale=1.2,
      vertex/.style={circle, draw, minimum size=0.7cm, inner sep=1pt},
      item/.style={rectangle, draw, minimum size=0.6cm, inner sep=2pt},
      unmatched/.style={vertex, fill=white},
      matched/.style={vertex, fill=gray!20},
      matched edge/.style={line width=1.5pt, color=red},
      alternating edge/.style={line width=1.5pt, dashed},
      node distance=2cm,
      >=latex
    ]
    \node[unmatched] (i0)    {$i_0$};
    \node[matched]   (i1)    [right=of i0]    {$i_1$};
    \node[matched]   (i2)    [right=of i1]    {$i_2$};
    \node[matched]   (i3)    [right=of i2]    {$i_3$};
    \node[matched]   (idots) [right=of i3]    {$\dots$};
    \node[matched]   (ik)    [right=of idots] {$i_k$};

    \node[item] (g1)    [below=of i0]    {$g_1$};
    \node[item] (g2)    [below=of i1]    {$g_2$};
    \node[item] (g3)    [below=of i2]    {$g_3$};
    \node[item] (gdots) [below=of i3]    {$\dots$};
    \node[item] (gk)    [below=of idots] {$g_k$};

    \draw[alternating edge] (i0)    -- node[midway, left]  {$\frac{a_j}{b_j}$}         (g1);
    \draw[alternating edge] (i1)    -- node[midway, left] {$\frac{a_{i_1}}{b_{i_1}}$}    (g2);
    \draw[alternating edge] (i2)    -- node[midway, left] {$\frac{a_{i_2}}{b_{i_2}}$}    (g3);
    \draw[alternating edge] (i3)    -- node[midway, left] {$\frac{a_{i_3}}{b_{i_3}}$}    (gdots);
    \draw[alternating edge] (idots) -- node[midway, left] {$\frac{a_{i_{k-1}}}{b_{i_{k-1}}}$} (gk);

    \draw[matched edge] (i1) -- node[midway, left, xshift=-0.1cm] {$\frac{a_{i_1}}{b_{i_1}}$}  (g1);
    \draw[matched edge] (i2) -- node[midway, left, xshift=-0.1cm] {$\frac{a_{i_2}}{b_{i_2}}$}  (g2);
    \draw[matched edge] (i3) -- node[midway, left, xshift=-0.1cm] {$\frac{a_{i_3}}{b_{i_3}}$} (g3);
    \draw[matched edge] (idots) --  (gdots);
    \draw[matched edge] (ik) -- node[midway, left, xshift=-0.1cm] {$\frac{a_i}{b_i}$}       (gk);

  \end{tikzpicture}
  \caption{
    An illustration of the alternating paths used in the proofs of \Cref{lem:principal-lemma,lem:max-le-min}.
Solid red edges represent edges included in the matching, while dashed black edges represent edges not included in the matching. Circles represent nodes corresponding to agents, while squares represent nodes corresponding to items. Grey agent nodes indicate matched agents, while white agent nodes indicate unmatched agents.
Edge labels indicate their corresponding weights.
  }
  \label{fig:alternating-path-contradiction}
\end{figure}

        The proof of \Cref{thm:bivalued} also relies on the following technical claim.

\begin{restatable}{lemma}{lemsufficcc}\label{lem:sufficcc}
Let $X$ be an allocation, and let $i$ and $j$ be any agents. Suppose that $v_i(X_i) \geq v_i(X_j) - b_i$. Then agent~$i$ does not EFX-envy agent~$j$. Moreover, if agent~$i$'s valuation is factored, then agent~$i$ also does not PMMS-envy agent~$j$.
\end{restatable}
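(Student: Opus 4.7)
The plan is to establish each part by a short calculation that exploits the bivalued structure, and, for PMMS, the divisibility inherent to factored valuations.

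For the EFX part, I would use the observation that under a bivalued valuation every single item has $v_i(\{g\}) \geq b_i$, since $v_i(\{g\}) \in \{a_i, b_i\}$ and $a_i > b_i$. Additivity then gives $v_i(X_j \setminus \{g\}) = v_i(X_j) - v_i(\{g\}) \leq v_i(X_j) - b_i$, and the hypothesis $v_i(X_i) \geq v_i(X_j) - b_i$ immediately yields $v_i(X_i) \geq v_i(X_j \setminus \{g\})$ for every $g \in X_j$. This step is a one-liner and requires no case analysis.

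For the PMMS part under the factored assumption, I would fix a maximin partition $(A, B)$ of $S = X_i \cup X_j$ with $v_i(A) \leq v_i(B)$, so that $\mu_i(S) = v_i(A)$, and prove $v_i(A) \leq v_i(X_i)$. I split on whether $v_i(X_i) \geq v_i(X_j)$. In the easy case, averaging gives $v_i(A) \leq (v_i(A) + v_i(B))/2 = (v_i(X_i) + v_i(X_j))/2 \leq v_i(X_i)$. Otherwise $v_i(X_i) < v_i(X_j)$, and here I invoke the factored structure: writing $a_i = k b_i$ for a positive integer $k$ (when $b_i > 0$), every bundle value $v_i(T)$ is a multiple of $b_i$, so the hypothesis $0 < v_i(X_j) - v_i(X_i) \leq b_i$ forces equality $v_i(X_j) - v_i(X_i) = b_i$. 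Averaging then gives $v_i(A) \leq v_i(X_i) + b_i/2$, and divisibility of both sides by $b_i$ upgrades this to $v_i(A) \leq v_i(X_i)$. The degenerate subcase $b_i = 0$ collapses into the easy case, since the hypothesis then reads $v_i(X_i) \geq v_i(X_j)$ directly.

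The main subtlety lies in the final discreteness step, which is precisely where the factored assumption is essential: without divisibility, small examples (e.g., $a_i = 4$, $b_i = 3$, with $X_i$ consisting of two low-valued items and $X_j$ of two high-valued items) already satisfy the hypothesis yet have $\mu_i(S) = v_i(X_i) + 1$, so no softer averaging argument can suffice. Beyond this, the proof is routine additivity bookkeeping.
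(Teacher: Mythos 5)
Your proof is correct and takes essentially the same route as the paper: the EFX part is the identical one-line observation that every item is worth at least $b_i$, and the PMMS part is the same averaging-plus-discreteness argument, with the paper normalizing to $b_i \in \{0,1\}$ and invoking integrality of $\mu_i(X_i \cup X_j)$ where you instead argue directly that all bundle values are multiples of $b_i$. Your explicit case split and the counterexample with $a_i = 4$, $b_i = 3$ are a slightly more careful write-up of the same idea, but not a different proof.
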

\begin{proof}
    The EFX condition follows immediately from this inequality, since any item is worth at least $b_i$ to agent $i$. For the PMMS condition, assume that $i$'s valuation is factored. Without loss of generality, we may assume $b_i \in \{0, 1\}$ and that $a_i$ is a non-negative integer. We obtain:
\begin{align*}
    v_i(X_i) \ge (v_i(X_i) + v_i(X_j) - b_i)/2 \ge \left\lfloor v_i(X_i \cup X_j)/2 \right\rfloor \ge \mu_i(X_i \cup X_j),
\end{align*}
where the last inequality follows from the fact that $\mu_i(X_i \cup X_j)$ must be an integer. This proves the PMMS part of the claim.
\end{proof}

We are now ready to prove the main theorem.

\begin{proof}[Proof of \Cref{thm:bivalued}]
Consider an execution of \Cref{alg:personalized-bivalued-pmms}, and fix two agents $i$ and $j$.  
Let $R$ denote the index of the last round, and let $r_i \leq R$ denote the last round in which an item $g$ with $v_i(\{g\}) = a_i$ was allocated to any agent. Let $X_{i,r}$ denote the singleton set allocated to agent $i$ in round $r$, or the empty set if agent $i$ does not receive any item in that round. Similarly, let $X_{j,r}$ denote the set allocated to agent $j$ in round $r$. Let $F_i \subseteq \{1, \ldots, R\}$ denote the set of rounds during which agent $i$ is inactive.

Our goal is to prove that agent $i$ does not EFX-envy agent $j$, and if the valuation is factored, then $i$ also does not PMMS-envy $j$. By \Cref{lem:sufficcc}, it suffices to show that $v_i(X_i) \geq v_i(X_j) - b_i$. We use this observation to complete the proof in most of the cases considered in our analysis.

Next, by \Cref{lem:principal-lemma}, we know that $v_i(X_{i,r}) = a_i$ for all $r \in \{1, \ldots, r_i - 1\}$, and $v_i(X_{i,r}) \geq b_i$ for all $r \in \{r_i + 1, \ldots, R - 1\} \setminus F_i$, where $|F_i| \leq \lfloor a_i /b_i - 1 \rfloor$. Moreover, $v_i(X_{j,r}) \leq a_i$ for all $r \in \{1, \ldots, r_i - 1\}$ and $v_i(X_{j,r}) \leq b_i$ for all $r \in \{r_i + 1, \ldots, R\}$.

\textit{Case 1: $v_i(X_{i,r_i}) = a_i$ and $v_i(X_{j,r_i}) = b_i$.}  
In this case, we have:
\begin{align*}
    v_i(X_i) &\geq r_i \cdot a_i + (R - 1 - r_i - \lfloor a_i / b_i - 1 \rfloor) \cdot b_i \\
    &= r_i \cdot a_i + (R - r_i) \cdot b_i - (\lfloor a_i / b_i - 1 \rfloor + 1) \cdot b_i\\
    &\geq r_i \cdot a_i + (R - r_i) \cdot b_i - ((a_i/b_i - 1) + 1) \cdot b_i \\
&= (r_i - 1) \cdot a_i + (R - r_i) \cdot b_i \\
&\geq v_i(X_j) - b_i.
\end{align*}

\textit{Case 2: $v_i(X_{i,r_i}) = a_i$ and $v_i(X_{j,r_i}) = a_i$.}  
In this case, by \Cref{lem:principal-lemma}, we have $F_i = F_j$, and:
\[
v_i(X_i) \geq r_i \cdot a_i + (R - 1 - r_i - |F_i|) \cdot b_i 
 \geq v_i(X_j) - b_i.
\]

\textit{Case 3: $v_i(X_{i,r_i}) = b_i$ and $v_i(X_{j,r_i}) = b_i$.}  
In this case, by \Cref{lem:principal-lemma}, we have $F_i = \emptyset$, and:
\[
v_i(X_i) \geq (r_i - 1) \cdot a_i + (R - r_i) \cdot b_i \geq v_i(X_j) - b_i.
\]

\textit{Case 4: $v_i(X_{i,r_i}) = b_i$ and $v_i(X_{j,r_i}) = a_i$.}  
In this case, by \Cref{lem:principal-lemma}, we know that agent $i$ was not matched during round $r_i$, and thus $F_i = \emptyset$. Moreover, agent $j$ must have been matched to item $X_{j,r_i}$ during round $r_i$. Indeed, it cannot be that agent $j$ received item $X_{j,r_i}$ through the operation in \Cref{line:remitems}, because if that were the case, then agent $i$ would have been matched to item $X_{j,r_i}$ during round $r_i$ by the construction of the algorithm.
Therefore, during round $r_i$, agents $i$ and $j$ belong to the same connected component, as they both have an edge to item $X_{j,r_i}$. Furthermore, by the choice of freezing time in \Cref{lin:freezingtime}, we have $v_i(X_{j,r}) = 0$ for all $r \in \{r_i + 1, \ldots, r_i + \lfloor a_i / b_i - 1 \rfloor\}$, since agent $i$ remains unmatched during round $r_i$.

First, suppose that $R > r_i + \lfloor a_i / b_i - 1 \rfloor$.  
Since agent $j$ becomes frozen, her priority is updated in \Cref{lin:settingpriority}, so that agent $i$ always chooses an item before agent $j$. Hence, $v_i(X_{i,R}) \geq v_i(X_{j,R})$. It follows that:
\begin{align*}
v_i(X_i) &= (r_i - 1) \cdot a_i + (R - r_i) \cdot b_i + v_i(X_{i,R}) \\
&= r_i \cdot a_i + (R - a_i/b_i - r_i) \cdot b_i + v_i(X_{i,R}) \\ 
&\geq r_i \cdot a_i + (R - 2 - \lfloor a_i / b_i - 1 \rfloor - r_i) \cdot b_i + v_i(X_{j,R}) \\
&\geq v_i(X_j) - b_i.
\end{align*}

Now, suppose that $R \leq r_i + \lfloor a_i / b_i - 1 \rfloor$.  
In this case, $X_j$ consists of exactly $r_i$ items, and $X_i$ consists of at least $r_i - 1$ high-value items and possibly some low-value items. The EFX condition holds since $v_i(X_i) \geq (r_i - 1) \cdot a_i \geq v_i(X_j \setminus \{g\})$ for any $g \in X_j$.
For the PMMS condition, observe that the total value of low-value items in $X_i$ is at most $(\lfloor a_i / b_i - 1 \rfloor + 1) \cdot b_i \leq a_i$, and therefore:
\[
\mu_i(X_i \cup X_j) \leq (r_i - 1) \cdot a_i + (|X_i| - r_i + 1) \cdot b_i = v_i(X_i).
\]

This completes the proof of the theorem.
\end{proof}

In the following remark, we explain the importance of the order in which unmatched agents receive items in \Cref{lin:unmatched_order} of \Cref{alg:personalized-bivalued-pmms}.

\begin{remark}
Consider the example described in \Cref{tab:allocation-rounds}. In the final round of the algorithm's execution, only two items remain, and all agents are active. Therefore, two agents must be left without an item in this round. According to the priority ordering $w$, the items are given to agents $1$ and $3$, as they were not matched at any point during the execution of the algorithm.

If, instead, an item were allocated to agent $2$ rather than agent $1$, then agent $1$ would EFX-envy agent $2$. In that case, agent $1$ would receive five low-value items, yielding a value of $5$, while agent $2$ would receive one high-value item and four low-value items. Agent $1$'s valuation for agent $2$'s bundle, even after removing one low-value item, would be $5.5$, violating EFX.
\end{remark}

We also observe that \Cref{alg:personalized-bivalued-pmms} allocates at least one item in each round, ensuring that it runs in polynomial time (in $n$ and $m$), as a maximum-weight maximal matching can be computed in polynomial time \cite{kuhn1955hungarian}. This leads to the following corollary.

\begin{corollary}\label{cor:compute_bivalued}
For any instance with personalized bivalued valuations, an EFX allocation can be computed in polynomial time. Furthermore, for any instance with factored personalized bivalued valuations, a PMMS allocation can also be computed in polynomial time.
\end{corollary}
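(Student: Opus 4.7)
The plan is to invoke Theorem~\ref{thm:bivalued} for correctness and then verify that Algorithm~\ref{alg:personalized-bivalued-pmms} runs in polynomial time. Since Theorem~\ref{thm:bivalued} already establishes that the output is an EFX allocation in general and a PMMS allocation in the factored case, all that remains is to bound the running time in terms of $n$ and $m$.

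For the runtime analysis, I would proceed in two steps. First, I would show that at least one item is removed from $P$ in every iteration of the while loop, which immediately bounds the number of iterations by $m$. If the maximum-weight maximal matching found in Line~\ref{line:matching} is non-empty, each matched pair removes an item from $P$. If the matching is empty, then every agent in $L_r$ is unmatched in $G$, so provided $L_r \neq \emptyset$, the loop starting at Line~\ref{lin:unmatched_order} removes at least one item via Line~\ref{line:remitems}. To complete this step, I would argue by induction on $r$ that $L_r \neq \emptyset$ whenever $P \neq \emptyset$: freezing in round $r$ only affects matched agents whose component contains an unmatched agent, so any unmatched agent in round $r$ remains active in round $r+1$; moreover, each agent's freezing period has bounded length $\lfloor a_i/b_i - 1 \rfloor$, so agents cannot accumulate indefinitely in the inactive set. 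Second, I would observe that each iteration itself runs in polynomial time: constructing the bipartite graph takes $\mathcal{O}(nm)$ time, computing a maximum-weight maximal matching can be done in polynomial time via classical methods \cite{kuhn1955hungarian}, and the remaining bookkeeping (identifying connected components, updating the freezing lists $L_{r+1}, L_{r+2}, \ldots$, and distributing leftover items in Line~\ref{lin:unmatched_order}) runs in time linear in $n+m$ per iteration.

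The main obstacle is the progress claim, since the freezing mechanism can temporarily shrink $L_r$ substantially, and one needs to verify carefully that enough agents always remain active to trigger either a matching or a round-end allocation. Once the inductive argument about the non-emptiness of $L_r$ is in place, the corollary follows immediately: the algorithm terminates after at most $m$ iterations, each of which runs in polynomial time, so the overall runtime is polynomial in $n$ and $m$, and Theorem~\ref{thm:bivalued} guarantees that the output satisfies EFX in general and PMMS whenever the personalized bivalued valuations are factored.
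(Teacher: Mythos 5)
Your proposal is correct and follows essentially the same route as the paper, which justifies the corollary by observing that at least one item is allocated per round (bounding the number of iterations by $m$) and that the maximum-weight maximal matching in each round is computable in polynomial time. The extra care you take to verify that $L_r$ remains non-empty is a reasonable elaboration of a detail the paper leaves implicit, but it does not change the argument.
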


Finally, we note that \citet*{Amanatidis_2021} also showed that for non-personalized bivalued valuations, an EFX allocation can be achieved by maximizing the Nash welfare (i.e., the geometric mean of agents' utilities, $(\prod_{i \in N} v_i(X_i))^{1/n}$). In sharp contrast, we show that this result does not extend to personalized bivalued valuations, as demonstrated by the following proposition.

\begin{proposition}
    \label{prop:mnw-fail-personalized} 
    There exists an instance with personalized bivalued valuations where none of the allocations maximizing Nash welfare satisfies EFX.
\end{proposition}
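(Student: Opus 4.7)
The plan is to exhibit a concrete two-agent, four-item instance with personalized bivalued valuations and show that every allocation maximizing Nash welfare violates EFX. The intuition is that in the non-personalized case all agents share the same ratio $a/b$, so Nash welfare treats the two "types" of items symmetrically across agents; once we allow $a_1/b_1$ and $a_2/b_2$ to differ, we can create an instance in which Nash welfare strictly prefers to concentrate low-value items on the agent with the smaller ratio, producing an unequal bundle-size configuration that necessarily violates EFX.

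Construction. I would take two agents with $a_1 = 10$, $b_1 = 1$ and $a_2 = 3$, $b_2 = 1$, and four items arranged so that items $1$ and $2$ are high-valued for both agents while items $3$ and $4$ are low-valued for both: explicitly, $v_1$ assigns the values $(10,10,1,1)$ and $v_2$ assigns $(3,3,1,1)$ to $(1,2,3,4)$. The mismatch between $a_1/b_1 = 10$ and $a_2/b_2 = 3$ is what creates the tension between Nash welfare and EFX.

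Next I would identify all Nash-welfare-maximizing allocations. Since only finitely many partitions exist, a direct check works, but to keep the argument short I would first rule out allocations in which one agent receives both high items (Nash welfare at most $20 \cdot 2 = 40$) and allocations in which each agent receives at least one high item and at least one low item (Nash welfare at most $11 \cdot 4 = 44$). Combined with the symmetries $1 \leftrightarrow 2$ and $3 \leftrightarrow 4$, this reduces the candidate MNW allocations to those of the form $(\{g\}, \{g'\}\cup\{3,4\})$ with $\{g, g'\} = \{1,2\}$, each achieving Nash welfare $10 \cdot 5 = 50$. Since no other partition attains $50$, these are precisely the MNW allocations.

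Finally, I would verify that both MNW allocations violate EFX. In $(\{1\}, \{2,3,4\})$, agent~$1$'s bundle is worth $v_1(\{1\}) = 10$, but $v_1(\{2,3,4\} \setminus \{3\}) = v_1(\{2,4\}) = 11 > 10$, so agent~$1$ EFX-envies agent~$2$ after removal of the low-valued item $3$. The other MNW allocation is symmetric. I would also remark that the instance is factored, strengthening the separation: even restricted to factored personalized bivalued valuations, Nash-welfare maximization is not a viable route to EFX (contrasting with the non-personalized case of \citet*{Amanatidis_2021}). The main obstacle is simply presenting the Nash-welfare comparisons compactly; the domination argument above isolates the two MNW candidates without a full $16$-partition enumeration.
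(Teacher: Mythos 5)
Your construction is essentially the paper's own proof: the paper uses the same two-agent, four-item instance with two high-valued and two low-valued items per agent (with $a_1 = 5$ rather than $10$, and $a_2 = 3$, $b_1 = b_2 = 1$), identifies the same two Nash-welfare-maximizing allocations of the form $\langle \{g\}, \{g',3,4\}\rangle$, and observes the same EFX violation for the first agent. Your argument is correct; the only nitpick is that your case reduction also leaves the mirror allocations $\langle \{g,3,4\}, \{g'\}\rangle$ as candidates (they achieve Nash welfare $12 \cdot 3 = 36 < 50$), which your closing remark that no other partition attains $50$ implicitly disposes of.
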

\begin{proof}
We construct an instance with two agents and four items. The personalized bivalued valuation for the first agent has parameters $a_1 = 5$ and $b_1 = 1$, and for the second agent, $a_2 = 3$ and $b_2 = 1$. For both $i \in \{1, 2\}$, we set the valuations to satisfy $v_i(\{g_1\}) = v_i(\{g_2\}) = a_i$ and $v_i(\{g_3\}) = v_i(\{g_4\}) = b_i$. The maximum Nash welfare for this instance is equal to $25$, which is achieved by two allocations: $\langle \{g_1\}, \{g_2, g_3, g_4\} \rangle$ and $\langle \{g_2\}, \{g_1, g_3, g_4\} \rangle$.
In both of these allocations, the first agent EFX-envies the second agent, which completes the proof of the proposition.
\end{proof}

\section{Binary-Valued Valuations}
\label{sec:binary}

In this section, we consider the class of binary-valued valuations, where all possible bundles are classified as either desirable or non-desirable; see \Cref{def:binary} for a formal definition. Importantly, we do not assume that these valuations are monotone or normalized, we only assume that they satisfy the MMS-feasibility condition (\Cref{def:mmsfeasibility}).
We now proceed to state our main theorem.

\thmbinary*

\begin{figure}[!htbp]
\centering

  \begin{minipage}{\textwidth}
    \centering
    \begin{algorithm}[H]
  \SetKwInOut{Input}{Input}
  \SetKwInOut{Output}{Output}
  \LinesNumbered
  \DontPrintSemicolon

  \Input{A binary-valued MMS-feasible instance $I = \langle N, M, \mathcal{V} \rangle$}
  \Output{A PMMS allocation $X = \langle X_1, \ldots, X_n \rangle$}

  Let $X = \{X_1, \ldots, X_n\}$ be an arbitrary allocation\;

  \While{$X$ is not PMMS}{\label{line:binary_while}
    Choose $s \in N$ such that $v_s(X_s) < \pmms_s(X_s \cup X_j)$ for some $j \in N$\;\label{line:binarychoiceofs}

    Construct the cut-and-choose graph $\pi$ for agent $s$ and allocation $X$\;

    Set $i_0 \gets s$\;
    Set $k \gets 0$\;

    \While{$\pi(i_k) \notin \{i_0, \ldots, i_k\}$}{
      Set $i_{k+1} \gets \pi(i_k)$\;
      
      Set $k \gets k + 1$\;
    }

    \eIf{$\pi(i_k) = i_0$}{\label{line:binary_if}
      Set $X_i \gets X_{\pi(i)}$ for all $i \in \{i_0, \ldots, i_k\}$
      \tcp*{Case 1: Cycle}
      
    }{
      Let $w$ be such that $\pi(i_k) = i_w$
 \tcp*{Case 2: Lollipop}
     
      Let $(A, B)$ be a maximin partition of $X_{i_0} \cup X_{i_w}$ by $v_{i_k}$\;\label{line:binarychoiceofaandb}

        \If{$v_{i_{w-1}}(A) < v_{i_{w-1}}(B)$}{
      Swap $A$ and $B$\;\label{line:binaryswappingofaandb}
    }
    
      Set $X_i \gets X_{\pi(i)}$ for all $i \in \{i_0, \ldots, i_{w-2}\} \cup \{i_w, \ldots, i_{k-1}\}$\;

      Set $X_{i_{w-1}} \gets A$\;
      
      Set $X_{i_k} \gets B$\;
    }
  }

  \Return{$X = \langle X_1, \ldots, X_n \rangle$}

  \caption{Cut-and-Choose-Graph Procedure}
  \label{alg:binary-pmms}
\end{algorithm}
\end{minipage}

  \vspace{1em}

\begin{minipage}{\textwidth}
    \centering

    \begin{center}
    \centering
    \begin{minipage}{0.41\textwidth}
        \centering
        \begin{tikzpicture}[
            every node/.style={circle, draw, minimum size=1cm, inner sep=0pt, align=center},
            >=stealth, thick
        ]
          \node (j) at (0,1.5) {$s$};
          \node (i1) at (1.75,1.5) {$i_1$};
          \node (i2) at (3.5,1.5) {$i_2$};

          \node (i5) at (0,0) {$i_k$};
          \node (i4) at (1.75,0) {$\ldots$};
          \node (i3) at (3.5,0) {$i_3$};

          \draw[->] (j) -- (i1);
          \draw[->] (i1) -- (i2);
          \draw[->] (i2) -- (i3);
          \draw[->] (i3) -- (i4);
          \draw[->] (i4) -- (i5);
          \draw[->] (i5) -- (j);
        \end{tikzpicture}
        \smallbreak
        (i) Case 1: Cycle.
    \end{minipage}
    \hfill
    \begin{minipage}{0.58\textwidth}
        \centering
        \begin{tikzpicture}[
            every node/.style={circle, draw, minimum size=1cm, inner sep=0pt, align=center},
            >=stealth, thick
        ]
          \node (j) at (-1.75,1.5) {$s$};
          \node (i05) at (0,1.5) {$\ldots$};
          \node (i1) at (1.75,1.5) {$i_{w-1}$};
          \node (i2) at (3.5,1.5) {$i_w$};
          \node (i3) at (5.25,1.5) {$\ldots$};

          \node (i5) at (3.5,0) {$i_k$};
          \node (i4) at (5.25,0) {$i_{k-1}$};

          \draw[->] (j) -- (i05);
          \draw[->] (i05) -- (i1);
          \draw[->] (i1) -- (i2);
          \draw[->] (i2) -- (i3);
          \draw[->] (i3) -- (i4);
          \draw[->] (i4) -- (i5);
          \draw[->] (i5) -- (i2);
        \end{tikzpicture}
        \smallbreak
        (ii) Case 2: Lollipop.
    \end{minipage}
\end{center}
\end{minipage}

\caption{An illustration of the operations performed by the Cut-and-Choose-Graph Procedure (\Cref{alg:binary-pmms}). The nodes represent agents, and the arrows depict the cut-and-choose graph. Subfigure (i) illustrates Case (1), where the if-condition in \Cref{line:binary_if} holds, and the new allocation is obtained by assigning each agent $i$ in the cycle the bundle of the agent to whom $i$ points. Subfigure (ii) illustrates Case (2), where the if-condition in \Cref{line:binary_if} does not hold. In this case, the new allocation is obtained by giving all agents in the cycle—except for agents $i_{w-1}$ and $i_k$—the bundles they are pointing to, and having agents $i_{w-1}$ and $i_k$ divide the two bundles $X_s$ and $X_{i_w}$ between themselves using the cut-and-choose method.}
\label{fig:binaryvaluations}
\end{figure}

The proof of \Cref{thm:binary} is based on the notion of the cut-and-choose graph, which is formally defined in the following definition. This notion is inspired by the envy graph introduced by \citet*{lipton}.

\begin{definition}[Cut-and-Choose Graph]
    For a fixed allocation $(X_1, \ldots, X_n)$ and a fixed agent $s \in N$, we define the cut-and-choose graph as follows.
The set of vertices is the set of agents $N$.
For every agent $i \in N$, we add an edge to precisely one agent $\pi(i) \in N$.
If $v_i(X_s) \geq \pmms_i(X_s \cup X_j)$ for all $j \in N$, then we set $\pi(i) = s$.
Otherwise, we set $\pi(i) = j$ for some $j \in N$ such that $v_i(X_s) < \pmms_i(X_s \cup X_j)$.
    \label{def:cutandchoosegraph}
\end{definition}

We use the following key properties of the cut-and-choose graph.

\begin{lemma}\label{lem:cutandchooseproperties}
Fix an allocation $(X_1, \ldots, X_n)$ and an agent $s \in N$, and consider the cut-and-choose graph. Then, for every agent $i \in N$, the following holds:
\begin{enumerate}[label=(\roman*)]
    \item If $\pi(i) = s$, then either $v_i(X_s) = 1$, or $\pmms_i(X_s \cup X_j) = 0$ for all $j \in N$. \label{enum:cutandchooseproperties_one}
    \item If $\pi(i) = j$ for some $j \neq s$, then $v_i(X_j) = 1$ and $\pmms_i(X_s \cup X_j) = 1$.\label{enum:cutandchooseproperties_two}
\end{enumerate}
\end{lemma}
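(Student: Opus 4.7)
The plan is to leverage the fact that binary-valued valuations take values in $\{0, 1\}$, so each $v_i(X_k)$ as well as each fair-share quantity $\pmms_i(X_s \cup X_j)$ lies in $\{0, 1\}$. This collapses \Cref{def:cutandchoosegraph} into a short case analysis, with the only nontrivial ingredient being MMS-feasibility in part (ii).

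For part (i), the definition of the cut-and-choose graph gives $v_i(X_s) \geq \pmms_i(X_s \cup X_j)$ for every $j \in N$. If $v_i(X_s) = 1$, we are done. Otherwise $v_i(X_s) = 0$, and since $\pmms_i$ is always non-negative (as $v_i$ has range $\{0, 1\}$), the inequality forces $\pmms_i(X_s \cup X_j) = 0$ for all $j \in N$, giving the second alternative.

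For part (ii), the definition gives $v_i(X_s) < \pmms_i(X_s \cup X_j)$, and since both sides lie in $\{0, 1\}$, this immediately forces $v_i(X_s) = 0$ and $\pmms_i(X_s \cup X_j) = 1$. To establish $v_i(X_j) = 1$, I would fix a maximin partition $(A, B)$ of $X_s \cup X_j$ witnessing $\pmms_i(X_s \cup X_j) = 1$, so that $v_i(A) = v_i(B) = 1$, and then apply MMS-feasibility (\Cref{def:mmsfeasibility}) with $S = X_s \cup X_j$ to the two partitions $(X_s, X_j)$ and $(A, B)$. This yields $\max\{v_i(X_s), v_i(X_j)\} \geq \min\{v_i(A), v_i(B)\} = 1$, and combined with $v_i(X_s) = 0$, we conclude $v_i(X_j) = 1$.

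There is no substantive obstacle; the only subtlety is that, because monotonicity and normalization are not assumed in \Cref{sec:binary}, the argument must rely exclusively on MMS-feasibility, which is precisely what allows one to pass from the maximin partition $(A, B)$ back to the allocation-induced partition $(X_s, X_j)$ in part (ii).
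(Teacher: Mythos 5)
Your proposal is correct and follows essentially the same argument as the paper: both parts reduce to the fact that all quantities lie in $\{0,1\}$, and part (ii) concludes $v_i(X_j)=1$ by applying MMS-feasibility to compare the partition $(X_s, X_j)$ against a maximin partition of $X_s \cup X_j$. Your explicit introduction of the witnessing partition $(A,B)$ is just an unpacking of the same inequality $\max\{v_i(X_s), v_i(X_j)\} \geq \pmms_i(X_s \cup X_j)$ that the paper invokes directly.
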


\begin{proof}
(i) Suppose $\pi(i) = s$. By the definition of the cut-and-choose graph (\Cref{def:cutandchoosegraph}), this implies that for all $j \in N$, we have
$$
v_i(X_s) \geq \pmms_i(X_s \cup X_j).
$$
This inequality holds if and only if either $v_i(X_s) = 1$, or $\pmms_i(X_s \cup X_j) = 0$ for all $j \in N$.

(ii) Suppose $\pi(i) = j$ for some $j \neq s$. Then, again by the definition of the cut-and-choose graph (\Cref{def:cutandchoosegraph}), it must be that
$$
v_i(X_s) < \pmms_i(X_s \cup X_j).
$$
Since $v_i(X_s) \in \{0, 1\}$ and $\pmms_i(X_s \cup X_j) \in \{0, 1\}$, the inequality implies $v_i(X_s) = 0$ and $\pmms_i(X_s \cup X_j) = 1$. 
By the definition of MMS-feasibility (\Cref{def:mmsfeasibility}), we have
    \[
    \max\left\{v_i(X_s), v(X_j)\right\} \geq \pmms_i(X_s \cup X_j) = 1.
    \]
    Since $v_i(X_s) = 0$, it must be that $v_i(X_j) = 1$.
\end{proof}

We now use \Cref{lem:cutandchooseproperties} to prove \Cref{thm:binary}. The proof relies on \Cref{alg:binary-pmms}. See \Cref{fig:binaryvaluations} for an illustration of the operations performed by the algorithm.

\begin{proof}[Proof of \Cref{thm:binary}]
By the while-loop condition in \Cref{line:binary_while}, it is clear that the output of \Cref{alg:binary-pmms} satisfies the PMMS condition. Hence, to prove \Cref{thm:binary}, it suffices to show that \Cref{alg:binary-pmms} terminates for any input.

Consider the tuple $\langle W, E \rangle$, where $W = \sum_{i \in N} v_i(X_i)$ and $E$ is the number of agents for whom the PMMS condition holds. 
We will show that in each iteration, either $W$ strictly increases, or $W$ remains the same and $E$ strictly increases. 
Since $W \in \{0, 1, \ldots, n\}$ and $E \in \{0, 1, \ldots, n\}$, the algorithm must terminate after at most $n^2$ iterations.

To prove this claim, fix an iteration of the while-loop. Let $X$ be the allocation at the beginning of the iteration and let $Y$ be the allocation at the end.

\emph{Case 1: Cycle.} Suppose the if-condition in \Cref{line:binary_if} is satisfied. By \Cref{lem:cutandchooseproperties}\ref{enum:cutandchooseproperties_two}, for all $i \in \{i_0, \ldots, i_{k-1}\}$, it holds that $v_i(Y_i) = v_i(X_{\pi(i)}) = 1$. Moreover, by \Cref{lem:cutandchooseproperties}\ref{enum:cutandchooseproperties_one}, we know that $Y_{i_k} = X_{i_0}$, and either (i) $v_{i_k}(X_{i_0}) = 1$, or (ii) $\pmms_{i_k}(X_{i_0} \cup X_j) = 0$ for all $j \in N$. 

Observe that initially, $\sum_{h=0}^k v_{i_h}(X_{i_h}) < k+1$ since $v_{i_0}(X_{i_0}) = 0$. Furthermore, the number of agents in $\{i_0, \ldots, i_k\}$ for whom the PMMS condition holds is strictly less than $k+1$, since agent $i_0 = s$ does not satisfy the PMMS condition by assumption in \Cref{line:binarychoiceofs}.

In case (i), $W$ strictly increases since after the reassignment, $\sum_{h=0}^k v_{i_h}(Y_{i_h}) = k+1$. In case (ii), although $v_{i_k}(Y_{i_k}) = v_{i_k}(X_{i_0})$ may be 0, all other agents in the cycle gain value 1, so $\sum_{h=0}^k v_{i_h}(Y_{i_h}) \geq k$. Thus, $W$ does not decrease. Moreover, the PMMS condition must hold for all agents in $\{i_0, \ldots, i_k\}$ since $\pmms_{i_k}(X_{i_0} \cup X_j) = 0$ for all $j \in N$. Hence, $E$ increases.

\emph{Case 2: Lollipop.} Now suppose the if-condition in \Cref{line:binary_if} is not satisfied. By \Cref{lem:cutandchooseproperties}\ref{enum:cutandchooseproperties_two}, we have $v_i(Y_i) = v_i(X_{\pi(i)}) = 1$ for all $i \in \{i_0, \ldots, i_k\} \setminus \{i_{w-1}, i_k\}$. Also, from the same lemma, we know that $\pmms_{i_{w-1}}(X_{i_w} \cup X_{i_0}) = 1$ and $\pmms_{i_k}(X_{i_w} \cup X_{i_0}) = 1$.

Since $(A, B)$ is set to be a maximin partition of $X_{i_w} \cup X_{i_0}$ by $v_{i_k}$ in \Cref{line:binarychoiceofaandb}, we have $v_{i_k}(B) = 1$ since $\pmms_{i_k}(X_{i_w} \cup X_{i_0}) = 1$. Finally, by the swap operation in \Cref{line:binaryswappingofaandb} and the MMS-feasibility of $v_{i_{w-1}}$, we have
$$
v_{i_{w-1}}(Y_{i_{w-1}}) = v_{i_{w-1}}(A) = \max \{v_{i_{w-1}}(A), v_{i_{w-1}}(B)\} \geq \pmms_{i_{w-1}}(A \cup B) = 1.
$$
Therefore, $W$ strictly increases in this case as well. This proves the claim, and consequently, completes the proof of the theorem.
\end{proof}

As noted in the proof of \Cref{thm:binary}, the tuple $\langle W, E \rangle$ increases lexicographically with every iteration of \Cref{alg:binary-pmms}. Since there are only $n^2$ possible values for this tuple, \Cref{alg:binary-pmms} runs in polynomial time (in $n$ and $m$). This leads to the following corollary.

\begin{corollary}\label{cor:compute_binary}
For any instance with binary-valued valuations, a PMMS allocation can be computed in polynomial time.
\end{corollary}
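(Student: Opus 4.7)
The plan is to observe that the correctness of \Cref{alg:binary-pmms} is already established by \Cref{thm:binary}, so only two things remain to verify: that the number of iterations of the outer \textbf{while}-loop is polynomially bounded, and that each individual iteration runs in polynomial time.

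For the iteration bound, I would invoke directly the potential argument from the proof of \Cref{thm:binary}. It is shown there that in every iteration of the loop starting at \Cref{line:binary_while}, either $W = \sum_{i \in N} v_i(X_i)$ strictly increases, or $W$ is unchanged while the number $E$ of agents currently satisfying PMMS strictly increases. Since $W$ and $E$ are both integers in $\{0, 1, \dots, n\}$, the tuple $\langle W, E \rangle$ takes at most $(n+1)^2$ distinct values under lexicographic order, so the loop terminates after at most $O(n^2)$ iterations.

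For the per-iteration cost, I would argue that under standard value-oracle access to each $v_i$, together with the ability to compute a maximin $2$-partition, every operation inside a single iteration is polynomial in $n$ and $m$. Detecting a PMMS-violating agent $s$ in \Cref{line:binarychoiceofs} and constructing the cut-and-choose graph of \Cref{def:cutandchoosegraph} both reduce to $O(n^2)$ comparisons of the form $v_i(X_s) \geq \mu_i(X_s \cup X_j)$, each straightforward because the binary valuations force both sides to lie in $\{0, 1\}$. Traversing the out-edges starting from $s$ to identify either a cycle or a lollipop takes $O(n)$ time, and the subsequent bundle reassignment in both cases—including the single maximin partition of $X_{i_0} \cup X_{i_w}$ by $v_{i_k}$ in \Cref{line:binarychoiceofaandb} and the swap test in \Cref{line:binaryswappingofaandb}—is again polynomial.

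Combining the $O(n^2)$ iteration bound with the polynomial per-iteration cost yields the polynomial running time claimed in the corollary. I do not foresee any substantive obstacle here; the statement is essentially a book-keeping observation that collects the bounds already implicit in the proof of \Cref{thm:binary}.
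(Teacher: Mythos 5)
Your proposal is correct and matches the paper's own justification: the paper likewise derives the corollary from the lexicographic increase of the tuple $\langle W, E \rangle$ established in the proof of \Cref{thm:binary}, bounding the number of iterations by $n^2$. Your additional per-iteration cost accounting (oracle access, constructing the cut-and-choose graph, maximin $2$-partitions) is a slightly more explicit version of what the paper leaves implicit, but the route is the same.
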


\section{Pair-Demand Valuations}\label{sec:pairdemand}

In this section, we analyze the natural class of pair-demand valuations, where each agent desires at most two items; see \Cref{def:pairdemand} for a formal definition. This class generalizes unit-demand valuations, in which every agent wants at most one item. While fairness notions such as EFX and PMMS are straightforward to guarantee under unit-demand valuations, they become non-trivial even in the pair-demand setting.
Our main result in this section is the following theorem.

\thmpmmspairdemand*

We first remark that our proof does not establish the existence of EFX allocations for pair-demand valuations, which we leave as an open problem; see \Cref{sec:conclusion}. In particular, PMMS does not imply EFX, as valuations may be degenerate in the sense discussed in \Cref{sec:implicationsofpmms}. Achieving EFX for pair-demand valuations may require allocating only a single item to one agent and more than two items to another. For example, consider an instance with two identical agents, one high-value item, and three low-value items. In this case, one agent must receive the high-value item, while the other receives all three low-value items, even though the third low-value item no longer increases the second agent's value and could instead benefit the first agent.
As we show below, it is possible to obtain a PMMS allocation while giving each agent exactly two items, provided there are at least $2n$ items in total.

Our proof is based on a two-stage round-robin algorithm (\Cref{alg:pair-demand-pmms}) in which agents select items according to a fixed order in the first stage, and then in the reversed order in the second stage. The idea of reversing the order in round-robin algorithms has been explored in the literature; for instance, in ABBA picking sequences by \citet*{brams2000win}, the Double Round-Robin algorithm of \citet*{AzizCIW22}, and as part of the Draft-and-Eliminate algorithm proposed by \citet*{Amanatidis_2020}.

\begin{algorithm}[t]
     \SetKwInOut{Input}{Input}
     \SetKwInOut{Output}{Output}
     \LinesNumbered
     \DontPrintSemicolon
     \Input{A pair-demand instance $I = \langle N, M, \mathcal{V} \rangle$ with $|M| \geq 2|N|$}
     \Output{A PMMS allocation $X = \langle X_1, \ldots, X_n \rangle$}
     Set $X_i \gets \emptyset$ for $i \in \{1, \ldots, n\}$ \tcp*{$X_i$ is the allocation of agent $i$}
     Set $P \gets M$ \tcp*{$P$ is the set of unallocated items}
     \For{$i \in N$ in increasing order of indices}{
         Let $g_i \in \argmax_{g \in P} v_i(\{g\})$ \;
         Add $g_i$ to $X_i$ \;
         Remove $g_i$ from $P$\;
     }
     \For{$i \in N$ in decreasing order of indices}{
         Let $h_i \in \argmax_{h \in P} v_i(\{h\})$ \;
         Add $h_i$ to $X_i$ \;
         Remove $h_i$ from $P$\;
     }
     Allocate all items left in $P$ to any player\;
     \Return{$X = \langle X_1, \ldots, X_n \rangle$}
     \caption{Reversed Round-Robin Algorithm}
     \label{alg:pair-demand-pmms}
\end{algorithm}

We being our analysis by establishing the following lemma.

\begin{lemma}
  \label{lem:additive-pair-demand-envy}
  Consider an agent $i$ with a pair-demand valuation $v_i$. Let $M = \{a, b, c, d\}$, and suppose that $v_i(\{a\}) \le v_i(\{b\}) \le v_i(\{c\}) \le v_i(\{d\})$.
Then, for any subset $S \subseteq M$ with $|S| = 2$, if $S \notin \left\{ \{a,b\}, \{a,c\} \right\}$, it holds that $v_i(S) \ge \mu_i(M)$.
  \end{lemma}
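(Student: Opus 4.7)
The plan is to compute $\mu_i(M)$ explicitly and then verify the claim by direct comparison. Write $\alpha = v_i(\{a\})$, $\beta = v_i(\{b\})$, $\gamma = v_i(\{c\})$, $\delta = v_i(\{d\})$, so that $\alpha \le \beta \le \gamma \le \delta$. Since $v_i$ is pair-demand with non-negative singleton values, for any $S \subseteq M$ with $|S| \ge 2$, $v_i(S)$ equals the sum of the two largest singleton values of items in $S$. In particular, $v_i$ takes only finitely many values on $M$, and every two-element subset of $M$ has value equal to the sum of its two singleton valuations.

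I first show that $\mu_i(M) = \min(\alpha + \delta,\, \beta + \gamma)$, with the maximin partition $(\{a,d\}, \{b,c\})$ realizing this value. The lower bound $\mu_i(M) \ge \min(\alpha + \delta,\, \beta + \gamma)$ is immediate from this partition. For the upper bound, I enumerate the remaining unordered partitions of $M$ into two bundles and bound the min of each by $\min(\alpha + \delta,\, \beta + \gamma)$, using the ordering of the singleton values. The two other $(2,2)$-partitions are $(\{a,b\}, \{c,d\})$ and $(\{a,c\}, \{b,d\})$, whose mins are $\alpha + \beta$ and $\alpha + \gamma$, both at most $\min(\alpha + \delta,\, \beta + \gamma)$ since $\alpha \le \beta$ and $\gamma \le \delta$. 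The $(1,3)$-partitions have min equal to the singleton's value (the three-element bundle has value $\ge \beta + \gamma \ge$ any singleton), which is at most $\delta \le \alpha + \delta$; moreover the only case where the three-element side matters is $(\{d\}, \{a,b,c\})$, yielding $\min(\delta,\, \beta + \gamma) \le \min(\alpha + \delta,\, \beta + \gamma)$. The trivial $(0,4)$-partition gives $0$. Hence $\mu_i(M) = \min(\alpha + \delta,\, \beta + \gamma)$.

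The conclusion is then a short verification. For the four allowed 2-subsets, $v_i(\{a,d\}) = \alpha + \delta$ and $v_i(\{b,c\}) = \beta + \gamma$ each dominate $\min(\alpha + \delta,\, \beta + \gamma)$; while $v_i(\{b,d\}) = \beta + \delta$ and $v_i(\{c,d\}) = \gamma + \delta$ exceed both $\alpha + \delta$ and $\beta + \gamma$ by monotonicity, so they too are at least $\mu_i(M)$. The excluded subsets $\{a,b\}$ and $\{a,c\}$, with values $\alpha + \beta$ and $\alpha + \gamma$, can strictly violate the bound (e.g.\ when $\alpha = \beta = 0 < \gamma = \delta$), confirming that the exclusion is tight.

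There is no real obstacle: the argument reduces to a short case analysis over the small number of partitions of a four-element set. The only minor care needed is to remember that the definition of $\mu_i$ allows empty parts, so the trivial partition must be included (though it contributes only $0$ and is subsumed by any non-degenerate partition).
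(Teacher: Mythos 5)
Your proof is correct and follows essentially the same route as the paper: compute $\mu_i(M) = \min\{v_i(\{a,d\}),\, v_i(\{b,c\})\}$ and compare it against the four admissible pairs; you are in fact more careful than the paper, which asserts without justification that only the three $(2,2)$-partitions need to be considered, whereas you explicitly dismiss the $(1,3)$- and $(0,4)$-partitions. One parenthetical is off — the claim that the three-element bundle's value $\beta+\gamma$ dominates any singleton need not hold (e.g.\ $\beta=\gamma=0<\delta$) — but this does not affect the argument, since you treat $(\{d\},\{a,b,c\})$ separately and the other three singleton sides are bounded by $\gamma \le \min\{\alpha+\delta,\,\beta+\gamma\}$.
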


\begin{proof}
    To calculate the PMMS value, we only need to consider three partitions of $M$ into two 2-item bundles. The fair share of agent $i$ is given by:
  \begin{align*}
    \mu_i(M) = \max\{& \min \{v(\{a, b\}), v(\{c, d\})\},\\
               & \min\{v(\{a, c\}), v(\{b, d\})\},\\
               & \min\{v(\{a, d\}), v(\{b, c\})\} \, \}
  \end{align*}
  By our assumption on the ordering of the items, we get that  $\min \{v(\{a, b\}), v(\{c, d\})\} = v(\{a,b\})$ and $\min\{v(\{a, c\}), v(\{b, d\})\} = v(\{a, c\})$. Moreover, since 
  \begin{align*}
    v(\{a, b\}) \le v(\{a, c\}) \le
  \min\left\{v(\{a, d\}), v(\{b, c\})\right\},      
  \end{align*}
it follows that $\mu_i(M) = \min\{v(\{a, d\}), v(\{b,
  c\})\}$. The result follows.
\end{proof}

We now prove the main theorem.

\begin{proof}[Proof of \Cref{thm:pmms-pair-demand}]
We will prove that \Cref{alg:pair-demand-pmms} returns a PMMS allocation.
By adding dummy items of value zero, we may assume  that there are at least $2n$ items.

Fix any two agents $i$ and $j$. Let $g_i$ and $h_i$ be the items allocated to agent $i$ in the first and second rounds, respectively. Let $x_j$ and $y_j$ be the two most valuable items in $X_j$ according to $v_i$, with $v_i(\{x_j\}) \ge v_i(\{y_j\})$.

    First, suppose that $i < j$. In this case, agent $i$ picked the first item before agent $j$ picked any item, so $v_i(\{g_i\}) \ge v_i(\{x_j\})$ and $v_i(\{g_i\}) \ge v_i(\{y_j\})$. By \Cref{lem:additive-pair-demand-envy}, it follows that $v_i(\{g_i,h_i\}) \geq \mu_i(\{g_i,h_i, x_j, y_j\}) \geq \mu_i(X_i \cup X_j)$.

   Next, suppose that $i > j$. In this case, agent $i$ picked both of their items before agent $j$ picked their second item, and so $v_i(\{g_i\}) \ge v_i(\{y_j\})$ and $v_i(\{h_i\}) \ge v_i(\{y_j\})$. Again, by \Cref{lem:additive-pair-demand-envy}, we have $v_i(\{g_i,h_i\}) \geq \mu_i(\{g_i,h_i, x_j, y_j\}) \geq \mu_i(X_i \cup X_j)$. 
   
   This completes the proof of the theorem.
\end{proof}

Since \Cref{alg:pair-demand-pmms} runs in polynomial time (in $n$ and $m$), we obtain the following corollary.

\begin{corollary}\label{cor:compute_pair_demand}
For any instance with pair-demand valuations, a PMMS allocation can be computed in polynomial time.
\end{corollary}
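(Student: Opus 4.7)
The plan is to produce a PMMS allocation via a two-round picking sequence in which the order of agents reverses between the first and second rounds. Concretely, after padding the instance with dummy zero-valued items so that $|M|\ge 2n$, I would let agents $1,2,\dots,n$ each pick their favorite remaining item, then let agents $n,n-1,\dots,1$ each pick their favorite of what remains, and finally distribute any leftovers arbitrarily (they have no effect on anyone's valuation). Every agent ends up with exactly two items that matter to her.

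The first technical step is a reduction to a four-item combinatorial lemma. Because valuations are pair-demand, for any bundle $T$ only the two most valuable items under $v_i$ count, so $\mu_i(X_i\cup X_j)$ depends only on the four top items of $X_i\cup X_j$ under $v_i$. Assuming each $X_i$ has exactly two items, these four items are precisely $X_i\cup X_j$; sorting them as $a\le b\le c\le d$ under $v_i$, a short case check yields $\mu_i = \min\{v_i(a)+v_i(d),\; v_i(b)+v_i(c)\}$. Comparing all six possible choices for the pair $X_i$ against this threshold, the only configurations where PMMS could fail are $X_i\in\{\{a,b\},\{a,c\}\}$. I would record this as a standalone lemma before the main theorem.

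The second step is to show that the reversed round-robin rules out both bad configurations for every ordered pair $(i,j)$. Fix $i\ne j$, and let $g_i,h_i$ be agent $i$'s round-one and round-two picks. If $i<j$, then $g_i$ is chosen before agent $j$ has taken anything, so both items later received by $j$ were still available; hence $v_i(g_i)\ge v_i(x_j),v_i(y_j)$, which places $g_i$ tied with the maximum $d$ in the sorted quadruple and immediately excludes both forbidden patterns. If $i>j$, then at the moment $i$ picks $g_i$ in round one and again at the moment $i$ picks $h_i$ in the reversed round two, agent $j$'s round-two item $h_j$ is still on the table (in round two, $j$ acts \emph{after} $i$). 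This yields $v_i(g_i),v_i(h_i)\ge v_i(h_j)$, and I would argue that each forbidden pattern $\{a,b\}$ or $\{a,c\}$ would force one of $g_i,h_i$ to be strictly smaller than $h_j$, contradicting the algorithmic inequality—except in tied situations, where the PMMS inequality collapses to an equality and still holds after relabeling.

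The main obstacle is the $i>j$ direction: because $j$ has already taken her round-one pick $g_j$ before $i$'s round-one turn, there is no direct comparison between $g_i$ and $g_j$, and the reversal in round two only gives us control over $h_j$. Verifying that the single comparison $v_i(g_i),v_i(h_i)\ge v_i(h_j)$ is strong enough to forbid both bad patterns is the crux of the argument, and I would handle ties carefully by exploiting the freedom in labeling $a\le b\le c\le d$ so that any tied item can be assigned to whichever position is convenient. The polynomial running time is immediate from the algorithmic description, yielding the corresponding corollary.
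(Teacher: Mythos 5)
Your proposal is correct and follows essentially the same route as the paper: the same reversed round-robin picking sequence with dummy padding, the same four-item lemma identifying $\{a,b\}$ and $\{a,c\}$ as the only two-item bundles below the fair share, and the same two-case analysis (for $i>j$ the paper compares both of $i$'s picks to the $v_i$-minimum of $j$'s bundle, which is equivalent to your comparison against $h_j$ since $v_i(\{y_j\})\le v_i(\{h_j\})$). The polynomial running time is, as you note, immediate from the algorithm's description.
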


\section{Conclusion and Open Problems}
\label{sec:conclusion}

Our work has provided several new insights into the existence of EFX and PMMS allocations. In particular, we established the first existence results for the less-explored notion of PMMS.
We conclude by highlighting several interesting open problems that arise from our findings.

We begin with a natural strengthening of our nonexistence result in \Cref{thm:separation}. There, we show that a single MMS-feasible valuation is insufficient to guarantee the existence of a PMMS allocation in the three-agent setting. A compelling question is whether two MMS-feasible valuations suffice.

\vspace{\problemspace}
\noindent \textbf{Open Problem 1}: \emph{Does PMMS exist for three agents, where two have MMS-feasible valuations and the third has a monotone valuation?}
\vspace{\problemspace}

Next, in \Cref{prop:pmmsnonexistance}, we  presented a PMMS counterexample with $n + \mathcal{O}(\log{n})$ items. It would be interesting to determine whether this number can be reduced to a constant.

\vspace{\problemspace}
\noindent \textbf{Open Problem 2}: \emph{Does there exist a constant $C > 0$ such that for every $n \geq 2$, there is an instance with $n$ agents and $n + C$ items that admits no PMMS allocation?}
\vspace{\problemspace}

In \Cref{sec:bivalued}, we showed that EFX allocations always exist for personalized bivalued valuations, and that PMMS allocations exist when these valuations are additionally factored. An important direction for future work is to determine whether PMMS allocations exist more generally, without the factored assumption.

\vspace{\problemspace}
\noindent \textbf{Open Problem 3}: \emph{Does PMMS exist for (not necessarily factored) bivalued valuations?}
\vspace{\problemspace}

In \Cref{sec:binary}, we showed that PMMS allocations always exist for binary-valued valuations. It would be interesting to investigate whether PMMS allocations also exist for (MMS-feasible) binary-marginal valuations.

\vspace{\problemspace}
\noindent \textbf{Open Problem 4}: \emph{Does PMMS exist for all MMS-feasible binary-marginal valuations?}
\vspace{\problemspace}

 Notably, \citet*{BarmanV21} show that for matroid-rank valuations, a subclass of binary-marginal valuations, there always exists a partial PMMS allocation (i.e., one where some items may remain unallocated) that also maximizes utilitarian social welfare. We further observe that for binary additive valuations \cite[e.g.,][]{DBLP:conf/wine/0002PP020}, where $v_i(\{g\}) \in \{0,1\}$ for $g \in M$ and $v_i(S) = \sum_{g \in S} v_i(\{g\})$ for $S \subseteq M$, the existence of a PMMS allocation is automatically implied by the existence of an EFX allocation; see \Cref{sec:binary_additive} for further details.

We also believe it is worth exploring whether our Cut-and-Choose-Graph procedure can be applied to other valuation classes beyond binary-valued settings.

In \Cref{sec:pairdemand}, we proved that PMMS allocations exist for pair-demand valuations where an agent's valuation is additive over their two most valued items. A natural generalization is the non-additive setting, where $v_i(S) = \max_{T \subseteq S : |T| \leq 2} v_i(T)$. It remains open whether PMMS allocations exist in this broader class, under the additional assumption of MMS-feasibility.

\vspace{\problemspace}
\noindent \textbf{Open Problem 5}: \emph{Does PMMS  exist for any MMS-feasible pair-demand valuations?}
\vspace{\problemspace}

Moreover, we did not establish the existence of EFX allocations for (additive) pair-demand valuations, leaving another important open question.

\vspace{\problemspace}
\noindent \textbf{Open Problem 6}: \emph{Does EFX exist for additive pair-demand valuations?}
\vspace{\problemspace}

\bibliographystyle{abbrvnat}
\bibliography{praca}

\appendix

\section{Missing Proofs of Section 3}\label{sec:proofsnonexistence}

\thmseparation*
\begin{proof}
We construct an instance with three agents and six items. Agent 3 has an additive valuation function given by $v_3(j) = 100 + j$ for any item $j \in \{1, \ldots, 6\}$. 

For each $i \in {1, 2}$, we define a monotone valuation function $v_i$ as follows. Set $v_i(S) = 1$ for all sets $S$ with $|S| = 1$, and $v_i(S) = 7$ for all sets with $|S| \ge 3$. For sets $S$ with $|S| = 2$, the values $v_i(S)$ are specified in \Cref{tab:valuations-3-agents-6-items-compact}. Note that for every $S$ with $|S| = 2$, we have $v_i(S) \in \{2, 3, 4, 5, 6\}$.

Fix any allocation $X = \langle X_1, X_2, X_3 \rangle$. We show that it violates the PMMS condition by considering the following cases.

\textit{Case 1: Some agent $i$ receives an empty bundle.} In this case, there exists some $j \ne i$ such that $|X_j| \ge 2$. Since all valuations assign positive value to nonempty bundles, it follows that
$\mu_i(X_i \cup X_j, 2) = \mu_i(X_j, 2) > v_i(X_i) = 0$,
which violates the PMMS condition.

\textit{Case 2: Some agent $i$ receives a singleton bundle.}
In this case, there exists an agent $j \ne i$ such that $|X_j| \ge 3$. If $i = 3$, then $v_3(X_3) < 200$, while $\mu_3(X_i \cup X_j, 2) > 200$, violating the PMMS condition. If $i \in \{1, 2\}$, then $v_i(X_i) = 1$, while $\mu_i(X_i \cup X_j, 2) \geq 2$, since all 2-element bundles have value at least $2$ and $|X_i \cup X_j| \ge 4$.

\textit{Case 3: Every agent receives a pair of items.}  
There are $6! / 2^3 = 90$ possible allocations in this case. We have verified computationally that none of these allocations satisfies the PMMS condition. This can be visualized as follows.

Consider the following graph construction. For every agent $i$, we create a node for each $2$-element subset $S \subseteq M$, labeled as $S_i$. We add an edge between nodes $S_i$ and $T_j$ (for $i \neq j$ and $S \cap T = \emptyset$) if and only if both $v_i(S) \ge \mu_i(S \cup T, 2)$ and $v_j(T) \ge \mu_j(S \cup T, 2)$ hold.
In other words, an edge exists between $S_i$ and $T_j$ if agents $i$ and $j$ would not PMMS-envy each other when allocated bundles $S$ and $T$, respectively.

We present this graph for the given instance in \cref{fig:envy-graph-counterexample}. Nodes with no adjacent edges are omitted from the visualization. Clearly, a PMMS allocation exists if and only if this graph contains a triangle. In the given graph, no such triangle exists.
\end{proof}

\begin{table}[t]
  \centering
  \setlength{\tabcolsep}{4pt}       %
  \renewcommand{\arraystretch}{1.1} %

  \begin{tabular}{|l|*{5}{c|}}
    \hline
      & $\{1,2\}$ & $\{1,3\}$ & $\{1,4\}$ & $\{1,5\}$ & $\{1,6\}$  \\
    \hline
    $v_1$   & 6         & 5         & 2         & 2         & 4                \\
    $v_2$   & 3         & 5         & 2         & 2         & 3                 \\
    \hline
  \end{tabular}

  \vspace{2ex}  %

  \begin{tabular}{|l|*{5}{c|}}
    \hline
      & $\{2,3\}$ & $\{2,4\}$ & $\{2,5\}$ & $\{2,6\}$ & $\{3,4\}$ \\
    \hline
    \centering $v_1$   & 2         & 3         & 5    & 4 & 4     \\
    \centering $v_2$   & 2         & 2         & 5    & 3 & 2    \\
    \hline
  \end{tabular}

   \vspace{2ex}  %

  \begin{tabular}{|l|*{5}{c|}}
    \hline
      & $\{3,5\}$ & $\{3,6\}$ & $\{4,5\}$ & $\{4,6\}$ & $\{5,6\}$  \\
    \hline
    \centering $v_1$       & 6         & 5         & 4         & 6         & 3                \\
    \centering $v_2$          & 4         & 2         & 3         & 5         & 4               \\
    \hline
  \end{tabular}

  \caption{The values assigned to pairs of items in the proof of \Cref{thm:separation} are organized such that the columns correspond to item pairs and the rows correspond to agents.}
  \label{tab:valuations-3-agents-6-items-compact}
\end{table}

\begin{figure}[htbp]
    \centering
    \includegraphics[width=\textwidth]{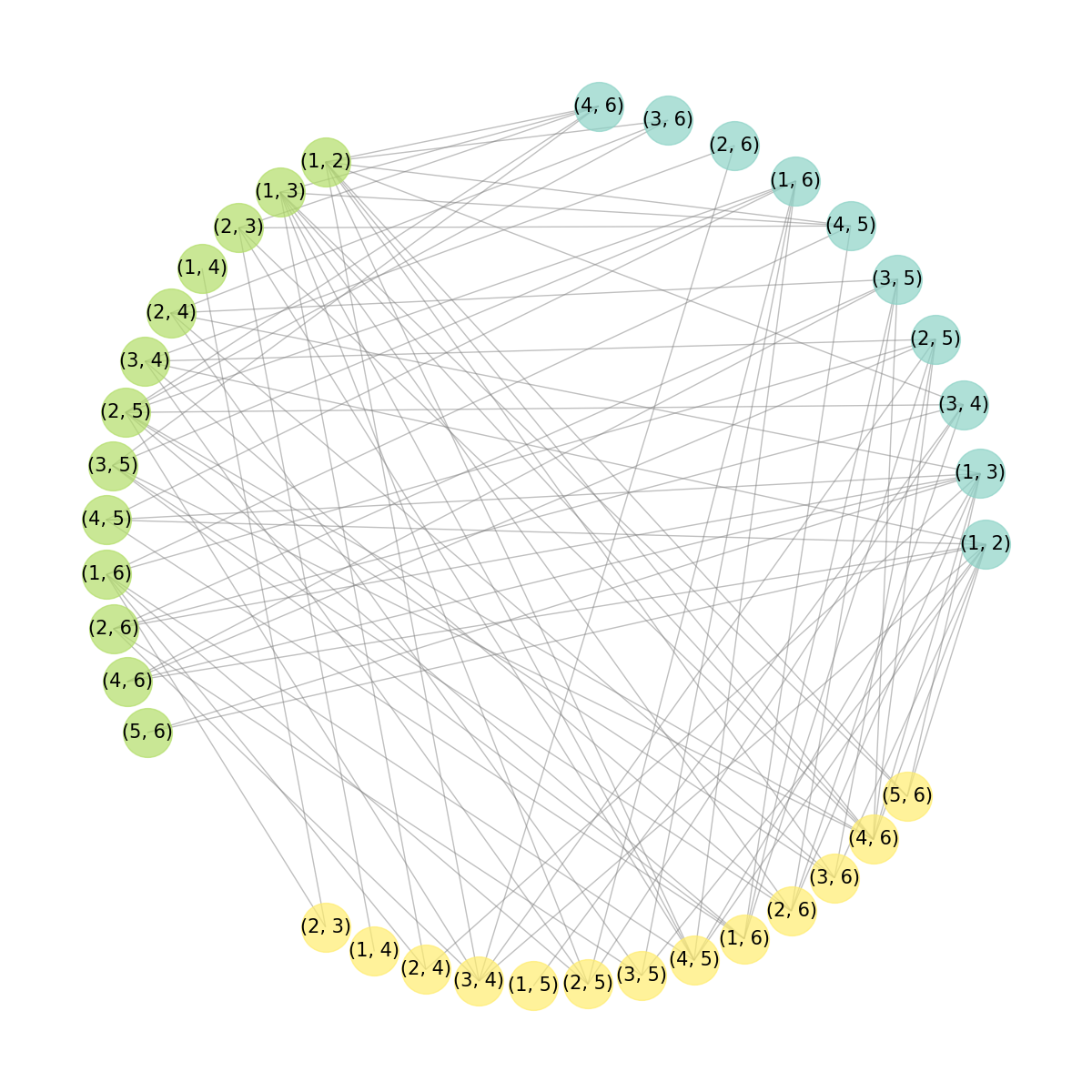}
    \caption{An illustration of the triangle-free graph considered in the proof of \Cref{thm:separation}. The blue, green, and yellow nodes correspond to 2-element subsets allocated to agents 1, 2, and 3, respectively. Note that we omit all nodes with no adjacent edges; for example, the blue node corresponding to the allocation $\{1,4\}$ for agent $1$ is omitted.}
    \label{fig:envy-graph-counterexample}
\end{figure}

\section{Missing Proofs of Section 4}\label{sec:proofsfrombivalued}

\lemprincipallemma*

    \begin{proof}
        Both parts of the lemma follow from the fact that high-value items are allocated via maximal matching. 
        Let $g$ be the item with $v_i(\{g\}) = a_i$  allocated to some agent in round $r_i$.
        
        For the first point, suppose for contradiction that agent $i$ was allocated an item worth $b_i$ in some round $r < r_i$. 
Since $r < r_i$, item $g$ was not yet allocated by round $r$. Because agent $i$ received a low-value item in round $r$, they were not matched in the matching selected during that round. However, since $g$ was still available, the edge $(i, g)$ could have been added to extend the matching in round $r$, contradicting the maximality of the matching chosen in \Cref{line:matching}.

       For the second part of the lemma, consider a round $r$ during which agent $i$ is frozen. By the condition in \Cref{line:if_unmatched}, agent $i$ can only be frozen if they are matched to some item and there exists another agent $j$ in the same connected component $C$ of $G$ who is unmatched. Since $i$ and $j$ belong to the same component, there exists an alternating path $i_0, g_1, i_1, g_2, i_2, \ldots, g_k, i_k$ in $G$, where $i_0 = j$, $i_k = i$,  and $(i_w, g_{w+1}) \in G$ for all $0 \leq w < k$, and $(i_w, g_w) \in G$ for all $0 < w \leq k$. 
       Along this path, agents $i_1, i_2, \ldots, i_k$ are matched to items $g_1, g_2, \ldots, g_k$, respectively. See \Cref{fig:alternating-path-contradiction} for an illustration of the alternating path used in the argument.

        Now, suppose $r < r_i$. Then the high-value item $g$ allocated in round $r_i$ has not yet been allocated in round $r$. In that case, we can increase the size of the matching by reassigning agent $i$ to item $g$, and reassigning agents $i_0, i_1, \ldots, i_{k-1}$ to items $g_1, g_2, \ldots, g_k$, respectively. This contradicts the maximality of the matching selected in \Cref{line:matching}. Therefore, agent $i$ cannot be frozen before round $r_i$.

        Furthermore, by \Cref{line:if_unmatched}, an agent can only get frozen in a round in which they are matched to a high-value item. Thus, agent $i$ must receive an item worth $a_i$ in that round. She also cannot get frozen in any round $r > r_i$, since by that point there are no remaining items of value $a_i$ available to her.

Moreover, agent $i$ is frozen for exactly $\lfloor t - 1 \rfloor$ rounds, where $t$ is the maximum value of $a_j / b_j$ among unmatched agents $j$ in the connected component to which $i$ belongs. By \Cref{lem:max-le-min}, the freezing time $t$ is at most $\lfloor a_i / b_i - 1\rfloor$. This completes the proof of the lemma.

For the third part, first consider the case where agent $j$ belongs to the same connected component as agent $i$ in the graph constructed during round $r_i$. By the construction of the algorithm, this implies that $F_i = F_j$.

Otherwise, if agents $i$ and $j$ belong to different connected components, then agent $j$ does not have an edge to item $X_{j,r_i}$ in the constructed graph. Hence, item $X_{j,r_i}$ must have been allocated to agent $j$ via the operation in \Cref{line:remitems} and was therefore unmatched during round $r_i$. We will now argue that agent $i$'s connected component contains no unmatched agents, which immediately implies that $F_i = F_j = \emptyset$.

Suppose, for the sake of contradiction, that there exists an unmatched agent $h$ in agent $i$'s connected component. As in the proof of the second part of the lemma, since $i$ and $h$ are in the same component, there exists an alternating path $i_0, g_1, i_1, g_2, i_2, \ldots, g_k, i_k$ in $G$, where $i_0 = h$, $i_k = i$, $(i_w, g_{w+1}) \in G$ for all $0 \leq w < k$, and $(i_w, g_w) \in G$ for all $0 < w \leq k$. Along this path, the agents $i_1, i_2, \ldots, i_k$ are matched to items $g_1, g_2, \ldots, g_k$, respectively.

Now, observe that we can increase the size of the matching by reassigning agent $i$ to item $X_{j,r_i}$, and reassigning agents $i_0, i_1, \ldots, i_{k-1}$ to items $g_1, g_2, \ldots, g_k$, respectively. This contradicts the maximality of the matching selected in \Cref{line:matching}, thus completing the proof of the lemma.
   \end{proof}

\section{Binary Additive Valuations}\label{sec:binary_additive}

In this section, we briefly discuss the case of binary additive valuations, where $v_i(\{g\}) \in \{0,1\}$ for all items $g \in M$ and $v_i(S) = \sum_{g \in S} v_i(\{g\})$ for all bundles $S \subseteq M$. In the following lemma, we argue that, in contrast to general non-degenerate instances, PMMS is strictly weaker than EFX under binary additive valuations.

\begin{lemma}
For any instance with binary additive valuations, any allocation that satisfies EFX also satisfies PMMS.
\end{lemma}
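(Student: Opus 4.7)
The plan is to fix agents $i$ and $j$, compute $\mu_i(X_i \cup X_j)$ explicitly using binary additivity, and then argue that the EFX condition directly gives us the inequality we need.

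First I would observe that under binary additive valuations, for any bundle $S$, we have $v_i(S) = |\{g \in S : v_i(\{g\}) = 1\}|$. Consequently, for any set $T$, the maximin partition into two parts is obtained by splitting the valued items of $T$ as evenly as possible, so
\[
\mu_i(T) = \left\lfloor v_i(T)/2 \right\rfloor.
\]
Applying this with $T = X_i \cup X_j$ and using additivity, it suffices to prove $v_i(X_i) \geq \lfloor (v_i(X_i) + v_i(X_j))/2 \rfloor$, which is equivalent to $v_i(X_i) \geq v_i(X_j) - 1$.

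The core step is to derive $v_i(X_i) \geq v_i(X_j) - 1$ from EFX. I would split into two cases. If $v_i(X_j) = 0$, then $v_i(X_i) \geq 0 \geq v_i(X_j) - 1$ trivially. Otherwise $v_i(X_j) \geq 1$, so there exists some item $g \in X_j$ with $v_i(\{g\}) = 1$, and additivity yields $v_i(X_j \setminus \{g\}) = v_i(X_j) - 1$. The EFX condition applied to this particular $g$ then gives $v_i(X_i) \geq v_i(X_j \setminus \{g\}) = v_i(X_j) - 1$, as required.

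There is no real obstacle here; the only subtlety worth flagging is that the implication breaks for general binary-valued (non-additive) valuations, because in that case $v_i(X_j \setminus \{g\})$ need not equal $v_i(X_j) - 1$ and $\mu_i(X_i \cup X_j)$ need not equal $\lfloor v_i(X_i \cup X_j)/2 \rfloor$. Accordingly, I would make sure both uses of additivity are stated explicitly in the write-up, so that the reader sees precisely where binary additivity (as opposed to mere binary-valuedness) is being invoked.
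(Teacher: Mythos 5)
Your proof is correct and takes essentially the same route as the paper's: both arguments reduce PMMS to the integer inequality $v_i(X_i) \ge v_i(X_j) - 1$ (obtained from EFX applied to a positively valued item of $X_j$) combined with the bound $\mu_i(X_i \cup X_j) \le \lfloor (v_i(X_i)+v_i(X_j))/2 \rfloor$. The paper merely packages this as a case split on whether $v_i(X_i) \ge v_i(X_j)$, whereas you state the floor formula for $\mu_i$ up front; the substance is identical.
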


\begin{proof}
Let $X = \langle X_1, \ldots, X_n \rangle$ be an EFX allocation. Fix any pair of agents $i$ and $j$. By EFX, we have $v_i(X_i) \geq v_i(X_j \setminus \{g\})$ for all $g \in X_j$.

First, if $v_i(X_i) \geq v_i(X_j)$, then clearly $v_i(X_i) \geq (v_i(X_i) + v_i(X_j))/2 = v_i(X_i \cup X_j)/2 \geq \mu_i(X_i \cup X_j)$, so the PMMS condition holds.

Otherwise, we must have $v_i(X_j) = v_i(X_i) + 1$ and $v_i(\{g\}) = 1$ for all $g \in X_j$. Moreover, $v_i(X_i \cup X_j) = 2 \cdot v_i(X_i) + 1$, so the fair share is $\mu_i(X_i \cup X_j) = \left\lfloor v_i(X_i \cup X_j)/ {2} \right\rfloor = v_i(X_i)$, and again PMMS is satisfied.
\end{proof}

For binary additive valuations, the existence of EFX allocations has been established by \citet*{DBLP:conf/wine/0002PP020,DBLP:conf/aaai/BabaioffEF21,DBLP:journals/teco/BenabbouCIZ21}. Combined with the lemma above, this implies the existence of PMMS allocations in such settings.

\begin{corollary}
For any instance with binary additive valuations, a PMMS allocation exists.
\end{corollary}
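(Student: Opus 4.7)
The plan is to combine the lemma immediately preceding the corollary with existing results in the literature on EFX allocations for binary additive valuations. By that lemma, every EFX allocation in a binary additive instance is also PMMS, so the existence of a PMMS allocation reduces to the existence of an EFX allocation in the same class.

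Concretely, I would first invoke the preceding lemma to reduce the claim to exhibiting a single EFX allocation for an arbitrary binary additive instance. Then I would appeal to the existing constructions of EFX allocations for this class, such as those of \citet*{DBLP:conf/wine/0002PP020,DBLP:conf/aaai/BabaioffEF21,DBLP:journals/teco/BenabbouCIZ21}, which in fact produce such allocations in polynomial time (for example, by maximizing Nash welfare with appropriate tie-breaking, or via matroid-based arguments). Running any of these algorithms yields an EFX allocation $X$; applying the preceding lemma to $X$ immediately shows that $X$ is also PMMS, which completes the proof of the corollary.

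There is essentially no substantive obstacle in this argument. The only step that requires care is verifying that the hypotheses match, namely that each agent's valuation is additive and each single-item value lies in $\{0,1\}$; this is precisely the setting covered both by the preceding lemma and by the cited EFX existence results, so the match is immediate. If one instead aimed for a fully self-contained proof avoiding external citations, the main difficulty would shift to reproving existence of EFX under binary additive valuations, which is itself already well understood through the Nash-welfare argument: an allocation maximizing Nash welfare (with ties broken by utilitarian welfare) can be shown to be EFX in this setting, and then the preceding lemma takes over.
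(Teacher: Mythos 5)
Your argument is correct and matches the paper's own proof exactly: the paper also combines the preceding lemma (EFX implies PMMS under binary additive valuations) with the cited EFX existence results of \citet*{DBLP:conf/wine/0002PP020,DBLP:conf/aaai/BabaioffEF21,DBLP:journals/teco/BenabbouCIZ21}. No further comment is needed.
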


\end{document}